\newcommand{\bx}{\boldsymbol x}
\newcommand{\bb}{\boldsymbol b}
\newcommand{\bX}{\boldsymbol X}
\newcommand{\bbeta}{\boldsymbol \beta}
\newcolumntype{C}[1]{>{\centering\arraybackslash}m{#1}}
\newtheorem{teo}{Theorem}
\newtheorem{definition}[teo]{Definition}
\theoremstyle{definition}
\newtheorem*{notation*}{Notation and conventions}
\newtheorem{remark}[teo]{Remark}
\begin{document}

 \title{\bf Causal generalized linear models via\\ Pearson risk invariance}
  \author{Alice Polinelli$^a$, Veronica Vinciotti$^a$, Ernst C. Wit$^b$ \\
	\small{$a.$ Department of Mathematics, University of Trento}\\
	\small{$b.$ Institute of Computing, Universit\`a della Svizzera italiana} \\
	}
\date{}
\maketitle
\begin{abstract}
Prediction invariance of causal models under heterogeneous settings has been exploited by a number of recent methods for causal discovery, typically focussing on recovering the causal parents of a target variable of interest.  Existing methods require observational data from a number of sufficiently different environments, which is rarely available. In this paper, we consider a 
structural equation model where the target variable is described by a generalized linear model conditional on its parents. 
Besides having finite moments, no modelling assumptions are made on the conditional distributions of the other variables in the system, and nonlinear effects on the target variable can naturally be accommodated by a generalized additive structure. Under this setting, we characterize the causal model uniquely by means of two key properties: the Pearson risk invariant under the causal model and, conditional on the causal parents, the causal parameters maximize the expected likelihood. These two properties form the basis of a computational strategy for searching the causal model among all possible models. A stepwise greedy search is proposed for systems with a large number of variables. Crucially, for generalized linear models with  a known dispersion parameter, such as Poisson and logistic regression, the causal model can be identified from a single data environment. The method is implemented in the R package \texttt{causalreg}.  
\end{abstract}

\section{Introduction}\label{sec:intro}

Causal inference is the study of determining the causal relationship between two or more variables.  This field has been a subject of research for many years, and its importance has only grown with the rise of artificial intelligence and machine learning \citep{pearl19,luo20}. Indeed, causal inference strategies  go beyond prediction, ensuring that machine learning models have out-of-distribution generalization  and a causal interpretation \citep{arjovsky19}. This is particularly important in applications such as healthcare and autonomous vehicles, where decisions made by machines have real-world consequences \citep{kuang20}. 

In recent years, causal inference has been used extensively in social sciences \citep{sobel00}, economics \citep{varian16}, and public health \citep{glass13} to determine the effectiveness of interventions and to inform policy decisions. For example, randomized controlled trials are commonly used in clinical studies to determine the causal effect of a treatment on a disease. However, conducting a randomized controlled trial is not always feasible or ethical. So, in some cases, observational studies must be used to infer causal relationships. 

One popular method for causal inference in observational studies is propensity score matching \citep{haukoos15}, which involves matching individuals who have similar characteristics and differ only in their exposure to a certain treatment or intervention. This technique has been used in various fields, such as education, finance, and healthcare, to determine the causal effect of a particular intervention. The focus of this propensity score matching is on the estimation of the causal effects, rather than the identification of the causal relationships. The latter is also referred to as causal discovery. An alternative approach to causal inference is the use of structural causal models  \citep{pearl09, bollen13}. These allow researchers also to model the causal relationships between variables, as well as to estimate the strength of these relationships. Within this framework,  a number of methods have been developed for learning the causal graph from observational data, starting from the traditional PC-algorithm \citep{spirtes01} and leading to various alternatives and extensions \citep{glymour19}. In many applications, learning the full causal graph is often too ambitious, and, furthermore, interest is more focused on discovering the direct causes of a certain target variable. This problem, which is also the focus of this paper, is sometimes referred to as causal feature selection \citep{guyon07}.

Embedded in the definition of structural causal models is the concept of causal invariance: the conditional distribution of a variable given its direct causes remains the same under perturbations on any other variable excluding the target variable itself. Causal invariance and the exclusion restriction assumptions are at the basis of invariant causal prediction. The method was recently proposed by \citet{peters16} to identify the causal relationships between variables in a system, and to predict the effect of an intervention on the system. 
The methodology is described primarily for the case of linear structural equation models and requires the availability of data across a number of sufficiently different environments. Searching for invariance across these environments results in conducting a number of statistical tests across all possible causal models. This can become computationally expensive for large systems.
Various extensions have been proposed in recent years to address these limitations. 

\citet{rothenhausler19} avoid the combinatorial search by limiting the source of heterogeneity. They model perturbations as a system of linear structural equations, where covariates are shifted additively by an instrument. In that scenario, the covariance between the covariates and the residuals under the causal model remains invariant across datasets. Given access to two datasets with sufficiently different shifts, the difference between these invariant covariances provides a moment condition that uniquely identifies the causal model. Estimating the moment condition gives rise to the causal Dantzig  estimator. One of the main drawbacks of the causal Dantzig is that the out-of-sample risk under the empirical estimator is often much larger than that of the ordinary least-squares estimator. This is due to the large variance of the causal Dantzig estimator, when the shifts in the different environments are small to moderate. 

Some alternatives to invariant causal prediction focus more closely on out-of-sample prediction guarantees \citep{arjovsky19}.  \citet{rothenhausler21} find that the out-of-sample risk depends on the correlation between the residuals and the instrument. The more uncorrelated they are, the stronger the out-of-sample risk guarantees under unseen covariate shifts, whereby the causal model provides the best guarantees. Assuming access to the instrumental variable that generates the covariate shift, the authors propose to build a regularized estimator, called anchor regression, that regulates the amount of correlation between the residuals and the instrumental variable. While overcoming the issues of causal Dantzig, the main disadvantage of anchor regression is that it requires explicit information about the shift variables. This is often not available in observational studies. 

\citet{kania23} take a different view to anchor regression and aim for a solution that lies between the ordinary least-squares solution, which has the best in-sample risk guarantees, and the causal Dantzig solution, which has the best out-of-sample risk guarantees but which can be attained only with enough heterogeneity in the data. They avoid the need of instrumental variables by adding a regularization term to the likelihood or \emph{pooled risk} that encourages causal invariance. A cross-validation strategy is proposed in order to find the best level of regularization from the available data. 

Causal invariance approaches, and their theoretical guarantees, are mostly presented for the case of linear structural equation models, i.e., linear dependences between the variables and Gaussian distributed errors. A small number of studies have considered extensions to more general models. \citet{buhlmann20} proposes an extension of anchor regression to the nonlinear case, while \citet{kook22} consider a transformation model \citep{hothorn14} for the conditional distribution of the target variable given the parents. Similarly, invariant causal prediction has been extended to the nonlinear case \citep{heinze18} and to the case of time-ordered \citep{pfister19} and general response types \citep{kook23}.  In the latter case, a transformation model is again used for the conditional distribution of the target variable given the parents, while causal invariance is translated into an assumption of invariance of the expected conditional covariance between the environments and the score residuals. As with the linear equation models with Gaussian errors, these extensions also suffer from the requirement of observational data from a number of sufficiently different environments.

This paper proposes a new model-based approach for causal discovery for general response types which, for some important special cases, does not require multiple environments. In particular, we consider a structural equation model where the target variable is described by a generalized linear model conditional on its parents \citep{mccullagh89}. The link function can include flexible, possibly nonlinear, additive modelling structures. As in \citet{kook23}, no modelling assumptions are made on the other conditional distributions. Under this setting, we characterize the causal model uniquely by means of two key properties: invariance of the Pearson risk and achieving the maximum of a conditional population likelihood. As the expected squared Pearson residual of the causal model depends only on the dispersion parameter, causal generalized linear models with a known dispersion parameter, such as Poisson or logistic regression, can be identified from a single environment. Although the results are general, here we will particularly focus on these models. 

In section~\ref{sec:model}, we define the structural equation model formally, including the characterization of the causal model describing the target variable. A statistical approach for detecting the causal model among all possible models from observational data is proposed. A computationally efficient stepwise strategy is presented for systems with many variables. 
Section~\ref{sec:simulation} evaluates the performance of the method on simulated data in the case of Poisson and logistic regression and performs a comparison with the PC-algorithm. Section~\ref{sec:realdata} considers an illustration of the method on data generated under a controlled experiment, before showing its applicability in realistic settings where the causal model is unknown. In particular, we show how the method can be used to identify the causal determinants of women's fertility and high earning, respectively. 

\section{Causal generalized linear model} \label{sec:model}

In this section, we define the proposed method. We begin by defining a structural equation model with exponential family target. Then we define a characterization of the causal model, that is unique under certain conditions. Given this characterization, we will define a population and empirical algorithm for the search of the causal model from observational data. 

\subsection{Structural equation model with exponential family target}
We define a structural equation model that describes the dependences between a set of random variables. Let $Y$ be the target variable of interest and let $\bX=(X_1,\ldots,X_p)^\top$ be the remaining variables in the system. The aim is to discover which of these variables are the causal parents of $Y$ and what their effect on $Y$ is. The following definition is considered for the joint distribution of $(\bX,Y)$.
\begin{definition} \label{def:model}
A structural equation model $(\bX,Y)$ with exponential dispersion family target $Y$ satisfies the following two conditions:
\begin{enumerate}
\item $(\bX, Y)$ is a structural causal model \citep{pearl09}; 
\item $Y | \bX_{\mbox{\scriptsize PA}} = \bx_{\mbox{\scriptsize PA}} \sim EDF\Big(b(f_{\mbox{\scriptsize PA}}(\bx_{\mbox{\scriptsize PA}})),a(\phi)\Big)$ with $\bX_{\mbox{\scriptsize PA}}$ the parents of $Y$ in the causal graph, where
\[EDF(b(\theta),a(\phi))=\exp\bigg\{\frac{y\theta-b(\theta)}{a(\phi)}+c(y;\phi)\bigg\}, \quad \quad \theta=f_{\mbox{\scriptsize PA}}(\bx_{\mbox{\scriptsize PA}}),\]
and $f_{\mbox{\scriptsize PA}}: \mathbb{R}^{|PA|} \rightarrow \mathbb{R}$ depends on $\bx\in\mathbb{R}^p$ only through $\bx_{\mbox{\scriptsize PA}}$, while $a(\cdot)$, $b(\cdot)$ and $c(\cdot)$ are known functions that define the distribution in the exponential dispersion family \citep{mccullagh89}.
\end{enumerate}
Additional regularity conditions on $\bX$ and $f_{\mbox{\scriptsize PA}}$, such as finite moments for $f_{\mbox{\scriptsize PA}}(\bX)$ in the case of Poisson distributed $Y$, as well as a strictly positive joint density $p(\bX)>0$, should be imposed depending on the the type of exponential family selected. We furthermore assume that the graph is faithful with respect to $Y$.  
\end{definition}

The distribution of $Y$ conditional on its parents is described by a generalized linear model with a canonical link, while no modelling assumptions are made on the distributions of the predictors $X_i$ given their parents. Notable examples in this family, frequently used on a variety of application studies, are Poisson and logistic regression. In both of these cases, $a(\phi)$ is known and equal to 1, while $b(\theta) = e^{\theta}$ for Poisson and $b(\theta) = \log(1+e^{\theta})$ for the logistic regression model. In the following, we will consider closely these distributions which belong to the exponential dispersion family and are defined by only one parameter. Besides Poisson and Binomial, other examples include the exponential and logarithmic distributions. Note that in our definition of the model, linearity is not assumed in the distribution of $Y$ given $\bX_{\mbox{\scriptsize PA}}$.  The special case of $f_{\mbox{\scriptsize PA}}(\bx_{\mbox{\scriptsize PA}})=\bbeta_{\mbox{\scriptsize PA}}^\top\bx_{\mbox{\scriptsize PA}}$ for some causal parameters $\bbeta_{\mbox{\scriptsize PA}}$ may be of interest in practice, but the theoretical derivations do not assume this and apply also to more general models like generalized additive models \citep{wood17}.

\subsection{Characterization of causal model} \label{sec:characterization}
The structural equation model defined above contains a causal parameter, such as the infinite dimensional functional parameter $f_{\mbox{\scriptsize PA}}$ or the finite dimensional linear parameter $\bbeta_{\mbox{\scriptsize PA}}$. The identification of this parameter identifies the causal structure governing $Y$. The next theorem describes the properties that characterise the causal model, that is the properties that are satisfied by $f_{\mbox{\scriptsize PA}}$.
\begin{teo} \label{th:causal}
Let $(\bX,Y)$ be distributed according to a structural equation model defined in definition~\ref{def:model}. Let $\bX_{\mbox{\scriptsize PA}}$ be the parents of $Y$ in this graph and $f_{\mbox{\scriptsize PA}}$ the true causal model linking $Y$ to its parents. Let $\dot{b}(\theta)$ and $\ddot{b}(\theta)$ be the first and second derivative of the cumulant generator function, respectively. 

Then, $f_{\mbox{\scriptsize PA}}$ satisfies the following population conditions:
	\begin{enumerate}
		\item  $f_{\mbox{\scriptsize PA}}$ is the maximizer of the expected likelihood of $Y$ and its causal parents,
		\[ f_{\mbox{\scriptsize PA}} = \arg\max_{f} \mathbb{E}_{\bX_{\mbox{\scriptsize PA}},Y}\left[\ell_{\bX_{\mbox{\scriptsize PA}},Y}(f) \right];  \]
		\item $f_{\mbox{\scriptsize PA}}$ achieves a perfectly dispersed Pearson risk,
		\[   \mathbb{E}_{\bX,Y}  \left[\frac{(Y- \dot{b}(f_{\mbox{\scriptsize PA}}(\bX)))^2}{\ddot{b}(f_{\mbox{\scriptsize PA}}(\bX))}\right] = a(\phi),   \]
	\end{enumerate}
for any arbitrary square-integrable distribution of $\bX$. In particular, this may be the observational or any interventional distribution. 
\end{teo}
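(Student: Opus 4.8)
The plan is to reduce both statements to the two standard moment identities of the exponential dispersion family, which I would establish first. Writing the normalization $\int \exp\{(y\theta-b(\theta))/a(\phi)+c(y;\phi)\}\,dy=1$ and differentiating once and twice under the integral sign with respect to $\theta$ yields, conditionally on the parents, $\mathbb{E}[Y\mid\bX_{\mbox{\scriptsize PA}}=\bx_{\mbox{\scriptsize PA}}]=\dot b(f_{\mbox{\scriptsize PA}}(\bx_{\mbox{\scriptsize PA}}))$ and $\mathrm{Var}[Y\mid\bX_{\mbox{\scriptsize PA}}=\bx_{\mbox{\scriptsize PA}}]=a(\phi)\,\ddot b(f_{\mbox{\scriptsize PA}}(\bx_{\mbox{\scriptsize PA}}))$. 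The finite-moment and regularity conditions attached to Definition~\ref{def:model} are precisely what license this interchange of differentiation and integration.

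For the first property I would apply the tower rule, writing the expected likelihood as $\mathbb{E}_{\bX_{\mbox{\scriptsize PA}}}\big[\mathbb{E}_{Y\mid\bX_{\mbox{\scriptsize PA}}}[\ell_{\bX_{\mbox{\scriptsize PA}},Y}(f)]\big]$, and maximise pointwise in the scalar value $t=f(\bx_{\mbox{\scriptsize PA}})$. Up to an additive term independent of $t$, the inner objective is $g(t)=\big(\dot b(f_{\mbox{\scriptsize PA}}(\bx_{\mbox{\scriptsize PA}}))\,t-b(t)\big)/a(\phi)$. Since $g''(t)=-\ddot b(t)/a(\phi)<0$ it is strictly concave, and its stationarity equation $\dot b(t)=\dot b(f_{\mbox{\scriptsize PA}}(\bx_{\mbox{\scriptsize PA}}))$ has the unique root $t=f_{\mbox{\scriptsize PA}}(\bx_{\mbox{\scriptsize PA}})$ because $\dot b$ is strictly increasing (its derivative $\ddot b$ equals a positive conditional variance up to $a(\phi)$). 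The assumption $p(\bX)>0$ forces the optimum to be attained at $f_{\mbox{\scriptsize PA}}$ on the whole support, and this conclusion does not depend on the weighting distribution of $\bX_{\mbox{\scriptsize PA}}$.

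For the second property I would first observe that the integrand depends on $\bX$ only through $\bX_{\mbox{\scriptsize PA}}$, so the expectation collapses to $\mathbb{E}_{\bX_{\mbox{\scriptsize PA}},Y}\big[(Y-\dot b(f_{\mbox{\scriptsize PA}}))^2/\ddot b(f_{\mbox{\scriptsize PA}})\big]$. Conditioning on $\bX_{\mbox{\scriptsize PA}}$, the numerator has conditional mean $\mathbb{E}[(Y-\dot b(f_{\mbox{\scriptsize PA}}))^2\mid\bX_{\mbox{\scriptsize PA}}]=\mathrm{Var}[Y\mid\bX_{\mbox{\scriptsize PA}}]=a(\phi)\,\ddot b(f_{\mbox{\scriptsize PA}})$, which cancels the denominator and leaves the constant $a(\phi)$; taking the outer expectation of a constant returns $a(\phi)$.

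The remaining content is the invariance clause for any square-integrable distribution of $\bX$. Both computations used only the conditional law $Y\mid\bX_{\mbox{\scriptsize PA}}$ and never the marginal of $\bX$. By the causal modularity and exclusion restriction built into the structural causal model of Definition~\ref{def:model}, any intervention or covariate shift that does not act on $Y$ leaves $p(y\mid\bx_{\mbox{\scriptsize PA}})$ unchanged, hence preserves the two moment identities, so the same value $a(\phi)$ and the same maximiser $f_{\mbox{\scriptsize PA}}$ are obtained under the observational or any interventional distribution. I expect the genuine difficulty to be conceptual rather than computational: making precise that the causal mechanism $Y\mid\bX_{\mbox{\scriptsize PA}}$, together with the differentiation-under-the-integral identities, is stable under an arbitrary change of measure on $\bX$, since the algebraic cancellation in the two parts is otherwise routine.
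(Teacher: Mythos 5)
Your proposal is correct and follows essentially the same route as the paper's own proof: the tower rule together with the exponential-dispersion-family identities $\mathbb{E}[Y\mid\bX_{\mbox{\scriptsize PA}}]=\dot b(f_{\mbox{\scriptsize PA}})$ and $\mathrm{Var}[Y\mid\bX_{\mbox{\scriptsize PA}}]=a(\phi)\,\ddot b(f_{\mbox{\scriptsize PA}})$, giving the cancellation to $a(\phi)$ for the Pearson risk and the maximization claim for the expected likelihood. Your treatment of the first property is in fact more complete than the paper's one-line appeal to maximum likelihood, since you supply the pointwise strict-concavity argument and the injectivity of $\dot b$ that make the maximizer unique.
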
 
\begin{proof}
The first property is a direct consequence of the maximum likelihood estimator maximizing the expected log-likelihood, i.e., if we restrict the system to only $Y$ and its parents, then 
\[\mathbb{E}_{\bX_{\mbox{\scriptsize PA}},Y}\left[\ell_{\bX_{\mbox{\scriptsize PA}},Y}(f) \right] = \mathbb{E}_{\bX_{\mbox{\scriptsize PA}},Y}\bigg[\frac{Yf(\bX_{\mbox{\scriptsize PA}})-b(f(\bX_{\mbox{\scriptsize PA}}))}{a(\phi)}+c(Y;\phi) \bigg]\]
is maximized by $f_{\mbox{\scriptsize PA}}$.

As for the second property, using the laws of conditional probabilities, we can write the Pearson risk as
\[\mathbb{E}_{\bX,Y}\bigg[\frac{(Y-\dot{b}(f_{\mbox{\scriptsize PA}}(\bX)))^2}{\ddot{b}(f_{\mbox{\scriptsize PA}}(\bX))}\bigg]=\mathbb{E}_{\bX_{\mbox{\scriptsize PA}}}\bigg[\mathbb{E}_{Y,\bX_{{\mbox{\scriptsize PA}}^C}}\bigg[\frac{(Y-\dot{b}(f_{\mbox{\scriptsize PA}}(\bX)))^2}{\ddot{b}(f_{\mbox{\scriptsize PA}}(\bX))}\bigg|\bX_{\mbox{\scriptsize PA}}\bigg]\bigg]\]
where ${\mbox{\scriptsize PA}}^C$ is the complementary set of the parental set of $Y$, and hence $\bX_{{\mbox{\scriptsize PA}}^C}$ are the non-causal variables.
Recalling that the expected values in the likelihood and Pearson risks are taken w.r.t. $(\bX,Y)$ distributed according to the true structural causal model, we have in particular that $Y|\bX_{\mbox{\scriptsize PA}}=\bx_{\mbox{\scriptsize PA}}\sim EDF\Big(b(f_{\mbox{\scriptsize PA}}(\bx_{\mbox{\scriptsize PA}})),a(\phi)\Big)$ for the chosen EDF distribution and $f_{\mbox{\scriptsize PA}}(\bX)$ is only a function of $\bX_{\mbox{\scriptsize PA}}$. Moreover, being a generalized linear model, it holds that $\mathbb{E}[Y|\bX_{\mbox{\scriptsize PA}}]=\dot{b}(f_{\mbox{\scriptsize PA}}(\bX_{\mbox{\scriptsize PA}}))$ and $\mathbb{V}ar[Y|\bX_{\mbox{\scriptsize PA}}]=\ddot{b}(f_{\mbox{\scriptsize PA}}(\bX_{\mbox{\scriptsize PA}}))a(\phi)$. Using these properties on the inner expectation, we can write the Pearson risk as
\[\mathbb{E}_{\bX,Y}\bigg[\frac{(Y-\dot{b}(f_{\mbox{\scriptsize PA}}(\bX)))^2}{\ddot{b}(f_{\mbox{\scriptsize PA}}(\bX))}\bigg]=\mathbb{E}_{\bX_{\mbox{\scriptsize PA}}}\bigg[\frac{\ddot{b}(f_{\mbox{\scriptsize PA}}(\bX_{\mbox{\scriptsize PA}}))a(\phi)}{\ddot{b}(f_{\mbox{\scriptsize PA}}(\bX_{\mbox{\scriptsize PA}}))}\bigg]=a(\phi).\]
\end{proof}

The second property in Theorem \ref{th:causal}  shows how the Pearson risk is invariant under the causal model, that is it is stable under any distribution of the covariates $\bX$, including possible changes due to interventions. Unlike the quadratic risk used in the Gaussian case \citep{kania23, rothenhausler19}, it is clear how an invariant property for generalized linear models must account also for their inherent heteroscedasticity. This is the role played by the conditional variance at the denominator of the Pearson risk. However, this invariance property can be satisfied by a number of other models.  Considering for simplicity a generalized linear model with a linear predictor described by a $p+1$-dimensional vector of parameters $\bbeta$, the set of solutions for which the Pearson risk condition is satisfied corresponds to a $p$ dimensional manifold in $\mathbb{R}^{p+1}$. The first condition in Theorem \ref{th:causal}, stating that the causal model maximizes the expected likelihood of the marginal model $(\bX_{\mbox{\scriptsize PA}},Y)$, guarantees an almost sure unique identification of the causal model within this surface, under the conditions described by the next theorem. 

\begin{teo} \label{th:uniqueness}
Let $S \subset \{1,\ldots,p\}$ indicate a potential set of causal parents of $Y$ in $\bX$. Let $f_S: \mathbb{R}^{|S|} \rightarrow \mathbb{R}$ be the function that depends strictly on $\bX_S$, and that satisfies (i) $f_S$ is defined on some functional space $\cal{F}$ with basis $\psi_S=\{\psi_1,\psi_2,\ldots\}$, i.e., $f_S(\bx_S)=\bbeta_S^\top\psi_S(\bx_S)$, and (ii) $\bbeta_S$ solves the expected likelihood score equations
\[\mathbb{E}_{\bX_S,Y}\left[ \psi_S(\bX_S)Y\right] - \mathbb{E}_{\bX_S,Y}\left[\psi_S(\bX_S)\dot{b}(\bbeta^\top\psi_S(\bX_S)) \right] = 0.\]
 Then 
\[S=PA \mbox{ and } f_S=f_{\mbox{\scriptsize PA}} \mbox{ a.s.} \quad \mbox{iff} \quad   \mathbb{E}_{\bX,Y}  \left[\frac{(Y- \dot{b}(\bbeta_S^\top\psi_S(\bX_S)))^2}{\ddot{b}(\bbeta_S^\top\psi_S(\bX_S))}\right] = a(\phi).\]
\end{teo}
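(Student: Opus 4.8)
This direction carries no new content beyond Theorem~\ref{th:causal}. If $S=PA$ and $f_S=f_{PA}$ almost surely, then substituting $\bbeta_S^\top\psi_S(\bX_S)=f_{PA}(\bX_{PA})$ into the Pearson risk and invoking the second property of Theorem~\ref{th:causal} immediately yields the value $a(\phi)$. The hypothesis (ii) is automatically consistent here, since by the first property of Theorem~\ref{th:causal} the maximiser $f_{PA}$ of the expected likelihood of $(\bX_{PA},Y)$ already solves the score equations. So the whole difficulty lies in the reverse implication, which I now plan out.

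\textbf{Reverse implication, setup.} Assume (ii) and that the Pearson risk equals $a(\phi)$; I must deduce $S=PA$ and $f_S=f_{PA}$ a.s. Write $\theta_0=f_{PA}(\bX_{PA})$ and $\theta_1=\bbeta_S^\top\psi_S(\bX_S)$. The first step is to condition on the full vector $\bX$ and use causal invariance, namely $\mathbb{E}[Y\mid\bX]=\mathbb{E}[Y\mid\bX_{PA}]=\dot b(\theta_0)$ and $\mathrm{Var}(Y\mid\bX)=\ddot b(\theta_0)a(\phi)$ (because $Y$ depends on $\bX$ only through $\bX_{PA}$), to obtain the exact decomposition
\[ \mathbb{E}_{\bX,Y}\!\left[\frac{(Y-\dot b(\theta_1))^2}{\ddot b(\theta_1)}\right] = a(\phi)\,\mathbb{E}\!\left[\frac{\ddot b(\theta_0)}{\ddot b(\theta_1)}\right] + \mathbb{E}\!\left[\frac{(\dot b(\theta_0)-\dot b(\theta_1))^2}{\ddot b(\theta_1)}\right]. \]
The second, mean--mismatch, term is nonnegative and vanishes exactly when $\dot b(\theta_1)=\dot b(\theta_0)$ a.s.; the first, variance-ratio, term equals $a(\phi)$ when $\theta_1=\theta_0$ but, as one checks on simple discrete $\bX_S$, may dip strictly below $a(\phi)$ for a mis-specified $\theta_1$. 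Hence the two terms cannot be controlled in isolation, and the naive pointwise inequality (fix $\theta_0$, vary $\theta_1$) is in fact false: the score equations must enter globally.

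\textbf{The crux.} The target is to show the right-hand side is $\ge a(\phi)$, with equality if and only if $\dot b(\theta_1)=\dot b(\theta_0)$ a.s. Rewriting (ii) with $\mathbb{E}[Y\mid\bX]=\dot b(\theta_0)$ and the tower rule shows that the mean residual $\dot b(\theta_0)-\dot b(\theta_1)$ is orthogonal to every element of $\mathrm{span}(\psi_S)$, and crucially---because the link is canonical---to $\theta_1=\bbeta_S^\top\psi_S(\bX_S)$ itself and to the constant. I would exploit precisely this: writing $b^\ast$ for the convex conjugate (so $\dot b^\ast(\dot b(\theta))=\theta$), the expected Bregman/KL divergence of the true from the candidate law,
\[ \mathbb{E}\big[\,b^\ast(\dot b(\theta_0))-b^\ast(\dot b(\theta_1))-\theta_1\big(\dot b(\theta_0)-\dot b(\theta_1)\big)\big]\ \ge\ 0, \]
has a vanishing cross term $\mathbb{E}[\theta_1(\dot b(\theta_0)-\dot b(\theta_1))]=0$ by the score equations. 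The main obstacle is then to relate this KL-type discrepancy to the chi-square--type excess risk on the display above and to fix the sign: I expect to write the excess as the expectation of a single convex discrepancy in $(\theta_0,\theta_1)$ after subtracting suitable multiples of the (zero) score functionals, and to read off that it is nonnegative and vanishes only when $\dot b(\theta_0)=\dot b(\theta_1)$ almost surely.

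\textbf{Conclusion.} Granting $\dot b(\theta_1)=\dot b(\theta_0)$ a.s., strict monotonicity of $\dot b$ (guaranteed by $\ddot b=\mathrm{Var}(Y\mid\cdot)/a(\phi)>0$) upgrades this to $f_S(\bX_S)=f_{PA}(\bX_{PA})$ a.s. To pass from this functional identity to $S=PA$, I would use that $f_S$ depends strictly on each coordinate of $\bX_S$ and $f_{PA}$ on each coordinate of $\bX_{PA}$, together with the strictly positive joint density $p(\bX)>0$ and faithfulness with respect to $Y$: a coordinate in $S\setminus PA$ would move $f_S$ while leaving $f_{PA}$ fixed on a set of positive measure, and a coordinate in $PA\setminus S$ would (by faithfulness, ruling out exact cancellation) move $\dot b(f_{PA})=\mathbb{E}[Y\mid\bX_{PA}]$ while leaving $f_S$ fixed; each contradicts the almost-sure equality. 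Therefore $S=PA$, and consequently $f_S=f_{PA}$ a.s., which closes the reverse implication.
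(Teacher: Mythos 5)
Your forward direction is fine and matches the paper. The reverse direction, however, rests on a false identity. You decompose the Pearson risk by conditioning on the full vector $\bX$ and substituting $\mathbb{E}[Y\mid\bX]=\dot b(f_{\mbox{\scriptsize PA}}(\bX_{\mbox{\scriptsize PA}}))$ and $\mathrm{Var}(Y\mid\bX)=\ddot b(f_{\mbox{\scriptsize PA}}(\bX_{\mbox{\scriptsize PA}}))a(\phi)$, ``because $Y$ depends on $\bX$ only through $\bX_{\mbox{\scriptsize PA}}$.'' That is only true of the structural equation generating $Y$; it is not true of the conditional law of $Y$ given all of $\bX$, because $\bX$ may contain \emph{descendants} of $Y$ (the paper's own examples have children $X_5,X_6$ of $Y$, and e.g.\ $\mathbb{E}[Y\mid X_1,X_2]\neq\mathbb{E}[Y\mid X_1]$ when $X_2$ is a child). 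This is not a repairable technicality: if $\mathbb{E}[Y\mid\bX]$ really equalled $\dot b(f_{\mbox{\scriptsize PA}})$, then the population MLE over any superset of the parents would already recover the causal function and the uniqueness theorem would have no content. The same problem propagates to your claim that the score equations kill the cross term $\mathbb{E}[\theta_1(\dot b(\theta_0)-\dot b(\theta_1))]$: the score equations give orthogonality of $\psi_S(\bX_S)$ to $\mathbb{E}[Y\mid\bX_S]-\dot b(\theta_1)$, and $\mathbb{E}[Y\mid\bX_S]$ is not $\mathbb{E}[\dot b(\theta_0)\mid\bX_S]$ when $S$ contains descendants. Finally, even granting your decomposition, the ``crux'' paragraph is explicitly a plan rather than an argument---you acknowledge that the variance-ratio term can dip below $a(\phi)$ and leave open how the Bregman inequality fixes the overall sign---so the implication is not established.

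For comparison, the paper avoids conditioning on descendants' information altogether: it conditions only on $\bX_S$, keeps the \emph{true} conditional moments $\mathbb{E}_Y[\,\cdot\mid\bX_S]$ unevaluated, argues (by a genericity claim over distributions of $\bX_S$) that the integrated Pearson identity forces the pointwise identity $\mathbb{E}_Y[(Y-\dot b(\theta_1))^2\mid\bX_S]=\ddot b(\theta_1)a(\phi)$, and then multiplies by $\psi_S\psi_S^\top$ to recognise the two sides as the variance of the score and the Fisher information. The information-matrix (Bartlett) equality then yields $Y\mid\bX_S\sim EDF(b(\bbeta_S^\top\psi_S(\bX_S)),a(\phi))$, and strict dependence plus faithfulness give $S=PA$. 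Your closing step (from $\dot b(\theta_1)=\dot b(\theta_0)$ a.s.\ to $S=PA$ via strict dependence and faithfulness) is in the right spirit, but you never validly reach that almost-sure equality.
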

\begin{proof}
If $S=PA$ and $f_S=f_{\mbox{\scriptsize PA}}$ a.s., then Theorem~\ref{th:causal} shows how $f_{\mbox{\scriptsize PA}}=\bbeta_{\mbox{\scriptsize PA}}^\top\psi_{\mbox{\scriptsize PA}}$ satisfies the likelihood score equations from the first property and the Pearson risk invariance from the second property. We therefore only need to prove the reverse direction of this theorem.\\
Consider then a $\bbeta_S$ that solves the score equations and that satisfies Pearson risk invariance. Then, since $\bbeta_S^\top\psi_S$ depends only on $\bX_S$, using similar derivations to before, we can rewrite the Pearson risk as
\[\mathbb{E}_{\bX,Y}\bigg[\frac{(Y-\dot{b}(\bbeta_S^\top\psi_S(\bX_S)))^2}{\ddot{b}(\bbeta_S^\top\psi_S(\bX_S))}\bigg]=\mathbb{E}_{\bX_{S}}\bigg[\frac{\mathbb{E}_{Y}[(Y-\dot{b}(\bbeta_S^\top\psi_S(\bX_S)))^2|\bX_{S}]}{\ddot{b}(\bbeta_S^\top\psi_S(\bX_S))}\bigg].\]
Thus, except for a null set on the space of distributions on $\bX_S$, Pearson risk invariance can be achieved only if 
\[\mathbb{E}_{Y}[(Y-\dot{b}(\bbeta_S^\top\psi_S(\bX_S)))^2|\bX_{S}] = \ddot{b}(\bbeta_S^\top\psi_S(\bX_S))a(\phi).\]
Multiplying both sides of the equation by $\psi_S(\bX_S)$ and dividing by $a(\phi)^2$, we have
\[\frac{\mathbb{E}_{Y}[\psi_S(\bX_S)^\top(Y-\dot{b}(\bbeta_S^\top\psi_S(\bX_S)))^2\psi_S(\bX_S)|\bX_{S}]}{a(\phi)^2}= \frac{\psi_S(\bX_S)^\top\ddot{b}(\bbeta_S^\top\psi_S(\bX_S))\psi_S(\bX_S)}{a(\phi)}.\]
Since $Y-\dot{b}(\bbeta_S^\top\psi_S(\bX_S))$ is a scalar, we can further rewrite the left hand side as an inner product, resulting in
\begin{align*}
&\mathbb{E}_{Y}\bigg[\frac{(\psi_S(\bX_S)(Y-\dot{b}(\bbeta_S^\top\psi_S(\bX_S)))^\top}{a(\phi)}\frac{\psi_S(\bX_S)(Y-\dot{b}(\bbeta_S^\top\psi_S(\bX_S)))}{a(\phi)}|\bX_{S}\bigg]\\
&= \frac{\psi_S(\bX_S)^\top\ddot{b}(\bbeta_S^\top\psi_S(\bX_S))\psi_S(\bX_S)}{a(\phi)}.
\end{align*}
Taking now expectation w.r.t. $\bX_S$ we have that
\begin{align*}
&\mathbb{E}_{\bX_S}\bigg[\mathbb{E}_{Y}\bigg[\frac{(\psi_S(\bX)(Y-\dot{b}(\bbeta_S^\top\psi_S(\bX_S))))^\top}{a(\phi)}\frac{\psi_S(\bX_S)(Y-\dot{b}(\bbeta_S^\top\psi_S(\bX_S)))}{a(\phi)}|\bX_{S}\bigg]\bigg] \\
&= \mathbb{E}_{\bX_S}\bigg[\psi_S(\bX_S)^\top\frac{\ddot{b}(\bbeta_S^\top\psi_S(\bX_S))}{a(\phi)}\psi_S(\bX_S)\bigg].
\end{align*}
We finally notice how the LHS is the expected value of the expected square of the gradient of the maximum log-likelihood of an EDF distribution, since $\bbeta_S$ solves the score equations by assumption. The right hand side is the expected value of the expected Fisher information matrix on $\bbeta_S$ of the same EDF distribution. The two sides match if and only if
\[Y \: | \: \bX_S \sim EDF(b(\bbeta_S^\top\psi_S(\bX_S)),a(\phi)).\]
Except for a null set on the space of distributions on $\bX_S$, under the true structural  causal model this is true only if $S=PA$ and $f_S=f_{\mbox{\scriptsize PA}}$.
\end{proof}

\begin{remark}
Uniqueness is achieved under the restriction that $f_S$ depends on \emph{all} $\bx_S$ --- in Theorem~\ref{th:uniqueness} this was referred to as \emph{strict} dependence on $\bx$. Any set $\bX_S$ that contains $\bX_{\mbox{\scriptsize PA}}$ together with variables that are d-separated from $Y$ by its parents would still satisfy the two properties that characterize the causal model. This is because $f_S$ is simply constant with respect to these variables. 
\end{remark}
In the next section, we propose a computational strategy for the search of the causal model based on the characterization that we have provided above and addressing the redundancy in the candidate models that we have just remarked.

\begin{algorithm}[tb!]
	\begin{enumerate}
		\item For any set of covariates $\bX_{S}$, with $S \subset \{1,\ldots,p\}$, determine the maximum of the expected likelihood, 
		\[ \bbeta_{S} = \arg\max_{\bbeta} \mathbb{E}_{\bX_S,Y}[\ell_{\bX_{S},Y}(\bbeta)].\]
		\item Check whether the Pearson risk is perfectly dispersed for $\bbeta_{S}$, that is whether
		\[\mathbb{E}_{\bX,Y}  \left[\frac{(Y- \dot{b}(\bbeta_S^\top\psi_S(\bX_S)))^2}{\ddot{b}(\bbeta_S^\top\psi_S(\bX_S))}\right] = a(\phi).	\]
		If it is, then $\bbeta_{S}^\top\psi_S$ is a candidate for $f_{\mbox{\scriptsize PA}}=\bbeta_{\mbox{\scriptsize PA}}^\top\psi_{\mbox{\scriptsize PA}}$.
		\item Among the set of candidate models $\bbeta_{S}$, select $\bbeta_{\mbox{\scriptsize PA}}$ as the one minimizing the Bayesian Information Criterion (BIC).
	\end{enumerate}
	\caption{Population algorithm for finding causal parents. \label{alg:pop}}
\end{algorithm}

\subsection{Population and empirical algorithms}
Theorem~\ref{th:uniqueness} can be translated into a population algorithm for finding the causal model. This involves a combinatorial search among all subsets of covariates, and is described in Algorithm~\ref{alg:pop}. The algorithm involves a search of the models that achieve a perfectly dispersed Pearson risk. Then, an information criterion is used to select the optimal model among the class of candidate models. The choice of an information criterion is motivated by the fact that variables that are d-separated from $Y$ by its causal parents, such as ancestors of $Y$,  have no predictive power on $Y$. In principle, any information criterion that penalizes the expected likelihood by a term including the number of covariates can be used at this stage.

\begin{algorithm}[tb!]
	\begin{enumerate}
		\item For any set of covariates $\bX_{S}$, with $S \subset \{1,\ldots,p\}$, find the penalized maximum likelihood estimator, 
		\[ \widehat{\bbeta}_{S} = \arg\max_{\bbeta} \sum_{i=1}^n \frac{y_i \bbeta^\top\tilde\psi_S({\bx_{i}}_{S})-b(\bbeta^\top\tilde\psi_S({\bx_{i}}_{S}))}{a(\phi)} + \lambda P_\lambda(\bbeta),\]
		for a finite dimensional basis $\tilde\psi_S$.
 		\item Find $S_1,\ldots, S_K \subset \{1,\ldots, p \}$ such that for $k=1,\ldots,K$
		\[H_0: \mathbb{E}_{\bX,Y}  \left[\frac{(Y- \dot{b}({\bbeta_S}_{k}^\top\psi_S(\bX_S)))^2}{\ddot{b}({\bbeta_S}_{k}^\top\psi_S(\bX_S))}\right] = a(\phi)\]
	  cannot be rejected
for some significance level $\alpha$. 
\item Among the set of candidate models $f_{S_1}, \ldots, f_{S_K}$, select $f_{\mbox{\scriptsize PA}}$ as the one minimising the Bayesian Information Criterion (BIC).
	\end{enumerate}
	\caption{Full empirical algorithm for finding causal parents. \label{alg:emp}}
\end{algorithm} 
For a finite sample, $(\bx_i, y_i)$, $i=1,\ldots,n$, an empirical version of the population algorithm is proposed in Algorithm~\ref{alg:emp}. This involves replacing the expected likelihood by the empirical likelihood, while the requirement of a perfectly dispersed Pearson risk is checked via an appropriate statistical test. In particular, for each candidate subset $\bX_{S}$, we test the null hypothesis 
\[H_0: \mathbb{E}_{\bX,Y}  \left[\frac{(Y- \dot{b}({\bbeta_S}^\top\psi_S(\bX_S)))^2}{\ddot{b}({\bbeta_S}^\top\psi_S(\bX_S))}\right] = a(\phi).\]
We consider a $q$-dimensional basis $\tilde\psi_S$ for $\{ f:\mathbb{R}^{|S|}\rightarrow\mathbb{R}~|~f= \beta^\top \tilde\psi_S\}$. Typical examples are (i) a simple linear basis, in which case $q=|S|$, or (ii) a simple additive model with individual $q_0$-dimensional bases for each of the variables, in which case $q=q_0|S|$.
For this, we use the test statistic
\[R^{+}_S=\displaystyle\sum_{i=1}^n \left[\frac{(y_i- \dot{b}({\widehat\bbeta}_{S}^\top\tilde\psi_S(\bx_{i,S})))^2}{\ddot{b}({{\widehat \bbeta}_{S}}^\top\tilde\psi_S(\bx_{i,S}))a(\phi)}\right],\]
where $\bx_{i,S}$ denotes the $i$-th realization of $\bX_S$ and $\widehat{\bbeta}_{S}$ is a penalized maximum likelihood estimator,
\[ \widehat{\bbeta}_{S} = \arg\max_{\bbeta} \sum_{i=1}^n \frac{y_i \bbeta^\top\tilde\psi_S(\bx_{i,S})-b(\bbeta^\top\tilde\psi_S(\bx_{i,S}))}{a(\phi)} + \lambda P_\lambda(\bbeta),\]
for some convenient smoothness or sparsity penalty $P_\lambda$ \citep{wood17}. 

The distribution of the test statistic under the null hypothesis is generally unknown, so bootstrap can be used to assess the evidence against the null hypothesis. The algorithm considers all possible models and retains those for which the null hypothesis cannot be rejected. Since the aim is to select perfectly dispersed models, a two-sided test is needed. As with the population algorithm, the candidate set of models obtained via statistical testing is finally refined via BIC, in order to find the sparse most predictive model among the class of perfectly dispersed models.

For some generalized linear models, an approximate distribution of the test statistic under the null hypothesis is available, leading to computational efficiency. Of special notice is the case of Poisson regression: under the null hypothesis and some standard conditions on the penalty, the Pearson statistic can be shown to be asymptotically chi-squared distributed. In particular, it holds, approximately, for the sum of the squared Pearson residuals that
\begin{equation}
	R^{+}_S \overset{H_0}{\sim} \chi^2_{n-\hat{q}(S)},
	\label{eq:test}
\end{equation}
with $\hat{q}(S)$ an estimate for the effective degrees of freedom of the model associate with variables $S$. Therefore, in this case, the acceptance region for the model that includes variables $S_k$ can be efficiently approximated by  
\[\chi^2_{n-\hat{q}_k,\frac{\alpha}{2}} \le \displaystyle\sum_{i=1}^n \left[\frac{(y_i- \dot{b}({\widehat\bbeta}_{S_k}^\top\tilde\psi_{S_k}(\bx_{i,S_k})))^2}{\ddot{b}({{\widehat \bbeta}_{S_k}}^\top\tilde\psi_{S_k}(\bx_{i,S_k}))a(\phi)}\right] \le \chi^2_{n-\hat{q}_k,1-\frac{\alpha}{2}},\]
for some significance level $\alpha$, where $\hat{q}_k$ is the effective degrees of freedom associated with the model involving $S_k$. 

\subsection{Stepwise algorithm}

The full algorithm considers all possible, i.e., $2^p$, models, which leads to an exponential computational complexity of the algorithm. For applications involving a large number of covariates, this may be prohibitively slow. We therefore propose also a stepwise version of the algorithm. The algorithm is is described in detail in Algorithm~\ref{alg:step} and is split into two phases. Starting from an empty model, in the first phase we add variables one-by-one until we get the largest model that we cannot reject to be perfectly dispersed. In a second phase, we remove superfluous, hopefully d-separated, variables by means of a BIC procedure. 

\begin{algorithm}[tb!]
	\begin{enumerate}
		\item Select a threshold $\alpha$. 
		\item Start with an empty model involving just the intercept; 
		\item Calculate the p-value $p$ associated with the test of the perfect dispersion of the squared Pearson residuals. 
		\item Iterate
		\begin{enumerate}
			\item For all available non-included variables $X_j$, fit the model with that variable included to the current model.
			\item Calculate p-value $p_j$ for the test of perfect dispersion associated with addition of variable $X_j$ to the existing model.
			\begin{itemize}
				\item Add variable $X_k$ to the existing model, if 
				$p_k = \max_j p_j$ and $p_k>p$ or $p_k> \alpha$. Continue with the iteration, setting $p = p_k$. 
				\item Stop with the iteration if $\max p_j < p$ and $\max p_j < \alpha$, and pass current model to step 5. 
			\end{itemize}  
		\end{enumerate}
		\item Starting from the model selected in step 4 with BIC $b$, iterate:
		\begin{enumerate}
			\item For all variables $X_j$ included in the current model, fit the model $M_{-j}$ without that variable $X_j$. 
			\item Calculate BIC $b_j$ for model $M_{-j}$ without variable $X_j$.
			\begin{itemize}
				\item Remove variable $X_k$ from the existing model, if $b_k = \min_j b_j$ and $b_k<b$. Continue with the iteration, setting $b=b_k$.
				\item Stop with the iteration, if $\min_j b_j>b$ and return the current model.  
			\end{itemize}
			\end{enumerate}
	\end{enumerate}
	\caption{Stepwise algorithm for finding causal parents. \label{alg:step}}
\end{algorithm}

\section{Simulation study} \label{sec:simulation}

In this section, we illustrate the performance of the procedure. 
In section~\ref{sec:simulation-pop} we explain the idea behind Theorem~\ref{th:causal} as the characterization of the causal parameters in the population scenario by means of a large sample.  In the following two sections, we show the performance of the method in two important cases where causal discovery can be achieved from a single observational environment. In particular, in section~\ref{sec:simulation-finite} we consider a Poisson regression setting for the target of interest, while in section~\ref{sec:simulation-logistic} we consider the case of logistic regression.

\subsection{Proof-of-concept: population scenario}
\label{sec:simulation-pop}
In this section, we highlight the main features of the proposed method via a small example. We consider the DAG in Figure~\ref{fig:sim2covariatesY}, with a target variable $Y$ having a causal parent $X_1$ and a child $X_2$.  
\begin{figure}[tb!]
    \centering
    \begin{subfigure}{.49\linewidth}
    \centering
\begin{tikzpicture}
	\node[state] (X1)  {$X_1$};
    \node[state] (Y) [right = 1cm of X1,blue]  {$Y$};
	\node[state] (X2)[right = 1cm of Y] {$X_2$};	
	\path[draw,thick,->]
	(X1) edge  (Y)
	(X1) edge[bend right] (X2)
	(Y) edge (X2);
\end{tikzpicture}
\caption{}
\end{subfigure}
\begin{subfigure}{.49\linewidth}
\begin{equation*}
 \begin{cases}
  X_1&=\epsilon_1 \\
  Z &= X_1+\epsilon_Z,\\ 
	Y & =F_{\rm{PA}}^{-1}(\Phi_{0,\sigma^2_Z}(\epsilon_{Z}))\\
  X_2&=Z+\epsilon_2, 
\end{cases}
\end{equation*}
\caption{\label{fig:sim2covariatesSEM}}
\end{subfigure}
\caption{Structural causal model for generating data on a target variable $Y$, $p=2$ predictors and the causal graph (a). The structural equations in (b) via a latent variable $Z$ show how the target $Y$ has a $\mbox{Poisson}(e^{X_1})$ distribution $F_{\rm{PA}}$ conditional on the causal parent $X_1$.}
\label{fig:sim2covariatesY}
\end{figure}
The formulation of the structural equations via a latent variable $Z$ (Figure \ref{fig:sim2covariatesSEM}) allows to control more easily the predictive power of the children of $Y$, since this variable is in the same scale as the predictors $\bX$. In this way, we can make the detection of the causal parameters $\bbeta_{\mbox{\scriptsize PA}}$ sufficiently challenging. For the simulation, we assume the noise components $\epsilon_1$, $\epsilon_2$ and $\epsilon_Z$ to be independent and identically distributed standard normal random variables, while the target $Y$ has a Poisson distribution $F_{\rm{PA}}$ conditional on its causal parent $X_1$, given by
\[Y | \bX_{\mbox{\scriptsize PA}} = \bx_{\mbox{\scriptsize PA}} \sim \mbox{Poisson}(e^{x_1}).\]

We simulate a dataset from this system with a large sample size ($n=100000$), i.e., mimicking the population setting. Figure~\ref{fig:contour-likelihood-insample} shows the contour lines of the likelihood function of a full model, with both $X_1$ and $X_2$ being potential parents to $Y$. Since $Y$ has a conditional Poisson distribution and considering only the terms dependent on $\bbeta$, this is given by
\[\ell(\bbeta)=\frac{1}{n}\sum_{i=1}^n [y_i \bbeta^\top\bx_{i}-e^{\bbeta^\top\bx_{i}}].\]
For easy of visualization, we consider a full model without the intercept, and so a $\bbeta$ vector of length 2.
As expected, Figure \ref{fig:contour-likelihood-insample} shows how the Maximum Likelihood (ML) estimate of $\bbeta$ achieves the maximum of this function, being the most predictive model on the observed data. On the other hand, the causal parameter $(1,0)$ lies on a lower contour line. 
\begin{figure}[h!]
\centering
\begin{subfigure}{.49\linewidth}
\centering
\includegraphics[width=\textwidth]{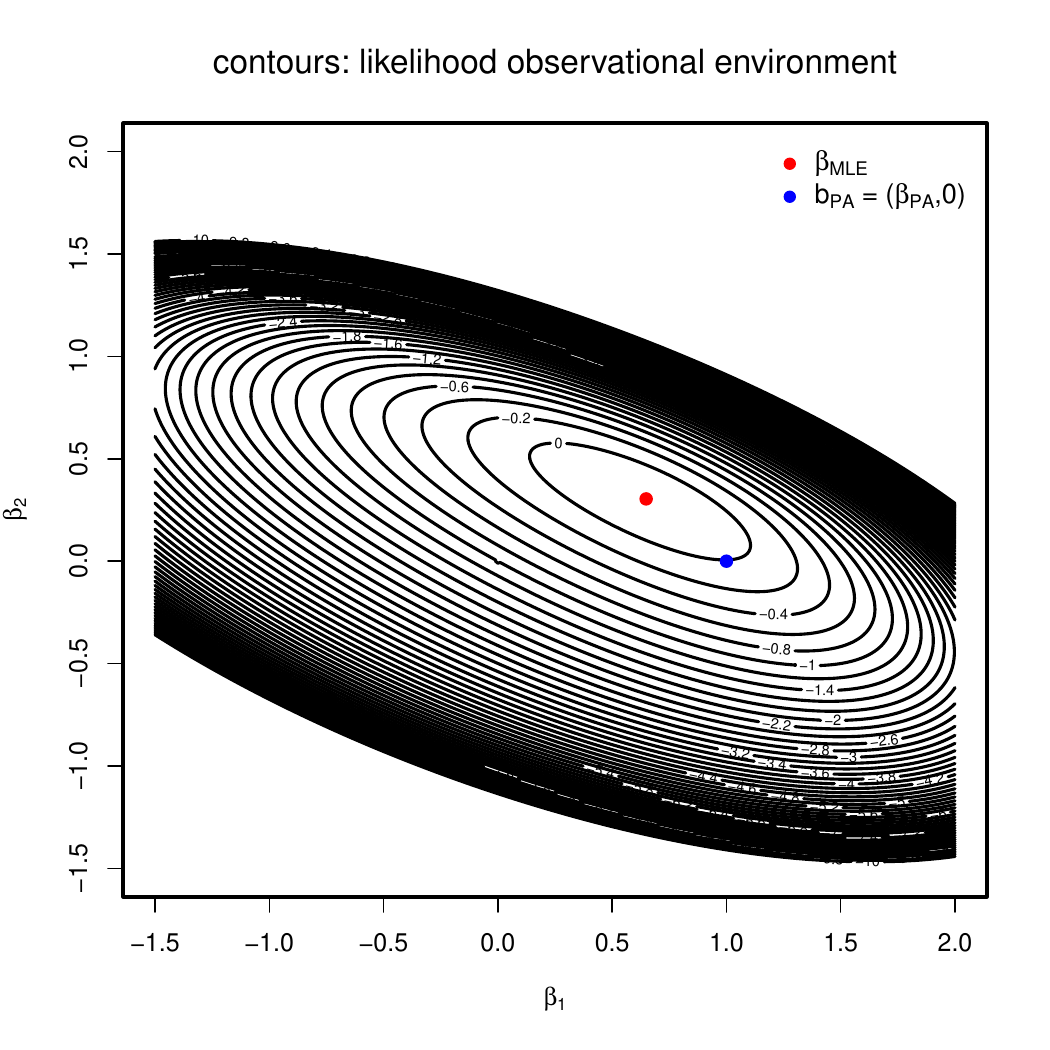}
\caption{\label{fig:contour-likelihood-insample}}
\end{subfigure}
\begin{subfigure}{.49\linewidth}
\centering
\includegraphics[width=\textwidth]{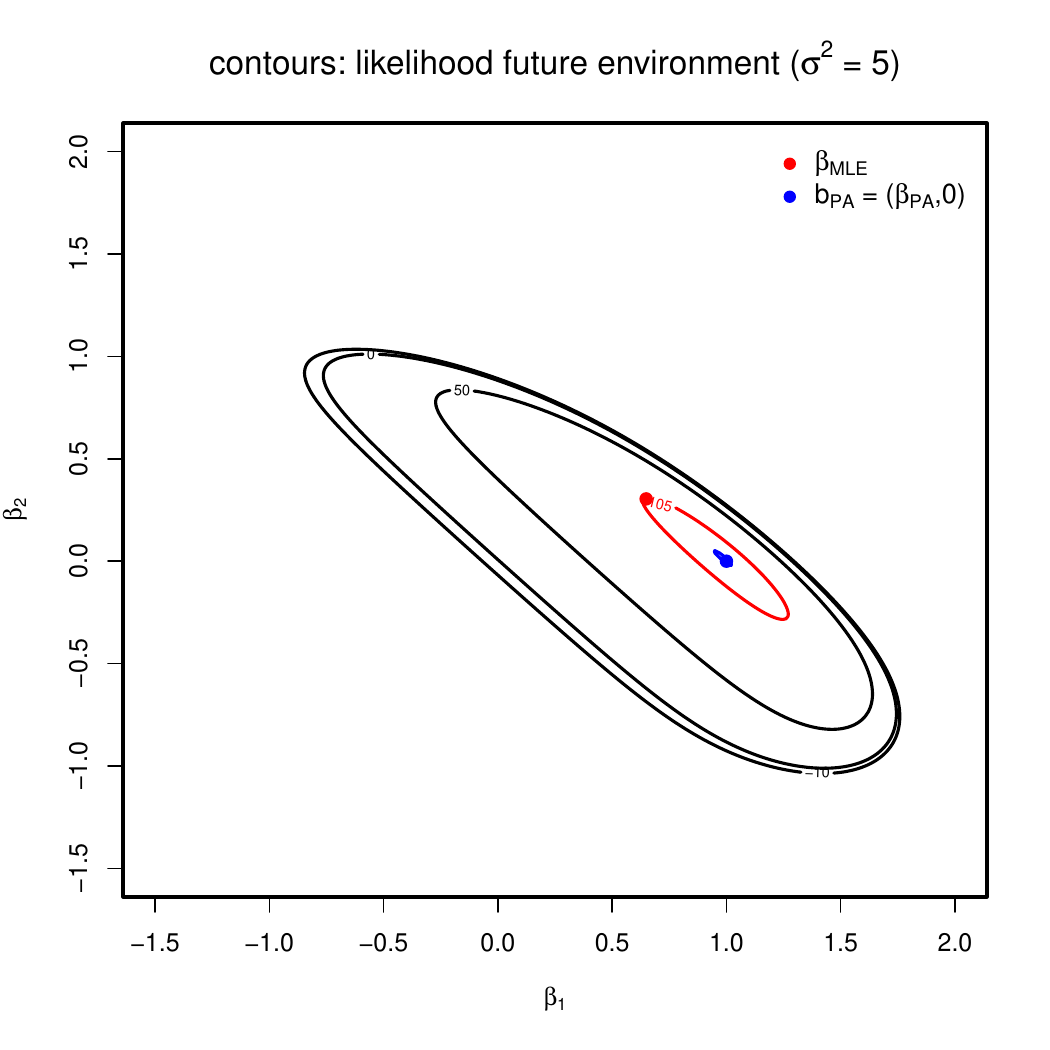}
\caption{\label{fig:contour-likelihood-fut5}}
\end{subfigure}
\begin{subfigure}{.49\linewidth}
\centering
\includegraphics[width=\textwidth]{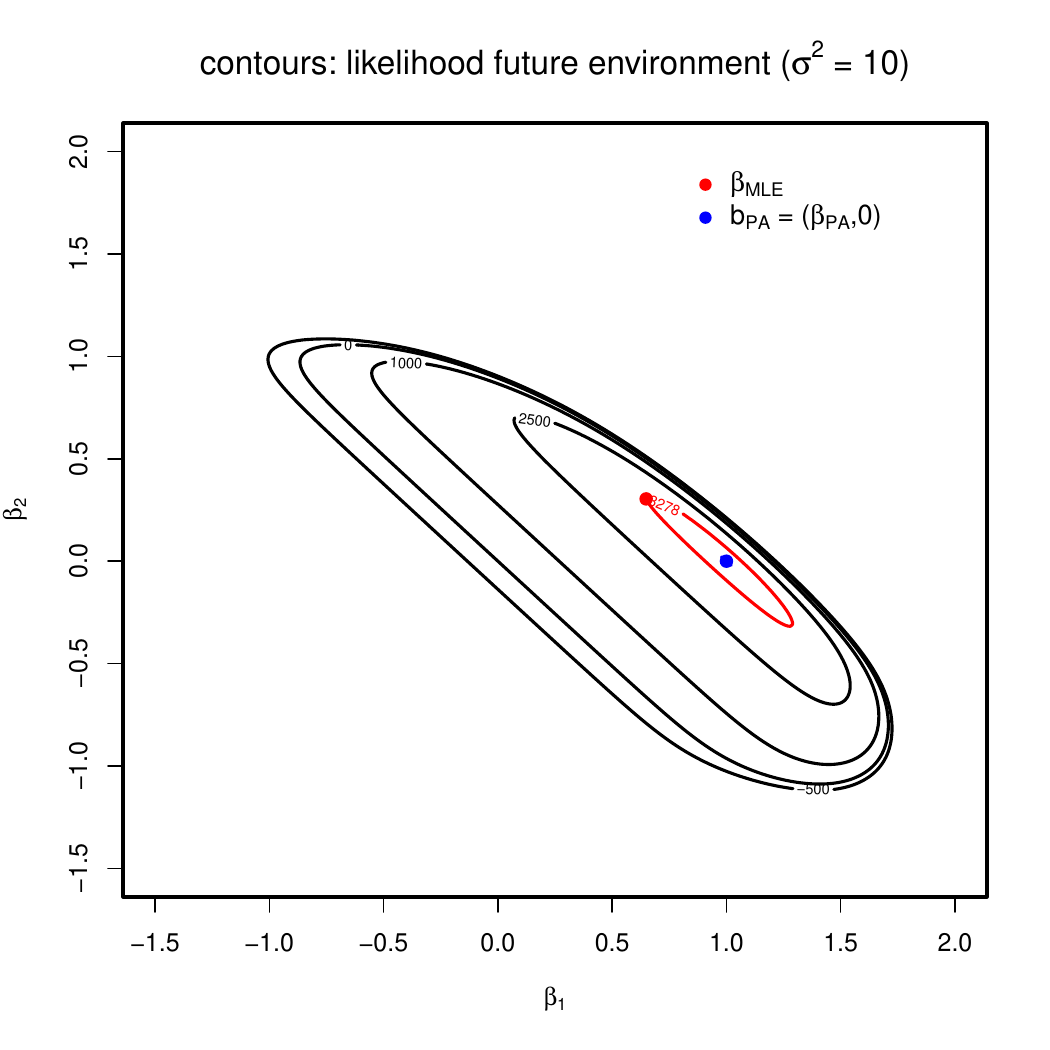}
\caption{\label{fig:contour-likelihood-fut10}}
\end{subfigure}
\begin{subfigure}{.49\linewidth}
	\centering
	\includegraphics[width=\textwidth]{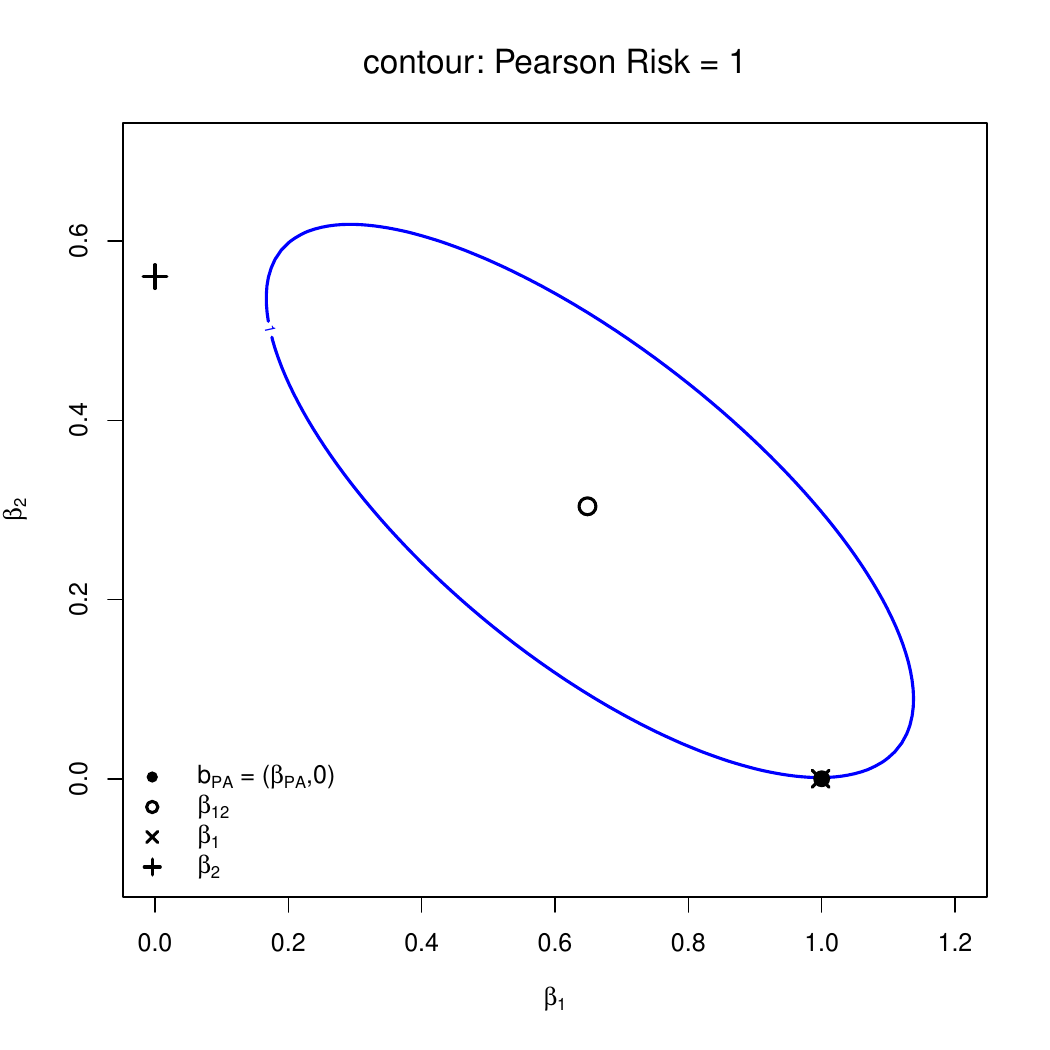}
	\caption{\label{contline_2cov}}
\end{subfigure}
\caption{Simulated data ($n=100000$) using a Poisson regression model with $p=2$ predictors. Contour lines of likelihood on (a) observational data and (b)-(c) two future environments, with covariates $X_1$ and $X_2$ perturbed by an additive noise with variance $\sigma^2=5$ and $\sigma^2=10$, respectively. The MLE solution (red dot) has a higher likelihood than the causal solution (blue dot) on the observational environment (a) but a lower likelihood on the shifted environments (b-c), particularly for larger noise (c). That is, the causal model has better out-of-sample guarantees. (d)  Contour line for Pearson risk $=1$ contains the true  causal parameters, which coincide with the ML $\bbeta_{1}$ of the model that includes only the causal parent $X_1$. The ML population estimates $\bbeta_{S}$ from other models are not on the Pearson risk $=1$ surface.}
\label{fig:contour-likelihood}
\end{figure}

The situation changes if one evaluates the prediction performance of the models on out-of-sample data. In particular, we consider the case where $X_1$ and $X_2$ are perturbed by an additional $N(0,\sigma^2)$ noise. Figures~\ref{fig:contour-likelihood-fut5} and \ref{fig:contour-likelihood-fut10} show the cases of $\sigma^2$ equal to $5$ and $10$, respectively. It is evident how the likelihood of the $\bbeta_{MLE}$ model evaluated on the out-of-sample data is now lower compared to that evaluated on the causal solution. This is more pronounced in Figure \ref{fig:contour-likelihood-fut10} than in Figure \ref{fig:contour-likelihood-fut5}, that is the larger the perturbation is. This shows, firstly, how causal models have better out-of-sample guarantees than predictive models, and, secondly, how maximising the likelihood on observational data is not an appropriate criterion for causal discovery. Both facts are well-known in the causal inference literature \citep[e.g.,][]{kania23}.

Figure~\ref{contline_2cov} shows the surface satisfied by all $\bbeta$ with Pearson risk equal to 1. As expected from Theorem~\ref{th:causal}, $\bb_{\mbox{\scriptsize PA}}=(1,0)$ belongs to this surface, and so does $\bbeta_{1}$, the population maximum likelihood estimator of the dependence of $Y$ only on the causal parent $X_1$. The remaining two models, i.e., the saturated model ($\beta_{12}$) and the model with $Y$ dependent only on $X_2$ ($\beta_{2}$), lead to population maximum likelihood estimates that do not lie on the Pearson risk surface. Indeed, these models are not causal and, therefore, do not enjoy the invariance property, as shown in the uniqueness result of Theorem~\ref{th:uniqueness}.

\subsection{Finite sample behaviour: causal Poisson regression}
\label{sec:simulation-finite}
In this section, we study the behaviour of the finite sample algorithm to obtain the causal parents of target $Y$.  We consider the more complex system in Figure~\ref{fig:sim7covariatesY}, where $Y$ has 2 causal parents ($X_2$ and $X_3$) and with 5 other variables in the system. 
\begin{figure}[tb!]
    \centering
    \begin{subfigure}{.39\linewidth}
    \centering 
		\begin{tikzpicture}
			\node[state] (X1)  {$X_1$};
			\node[state] (X2) [below= 0.5cm of X1 ]  {$X_2$};
			\node[state] (X3)[right = 1cm of X2] {$X_3$};	
			\node[state] (Y) [below= 2cm of X1,blue]  {$Y$};
                \node[state] (X4) [left= 1cm of Y]  {$X_4$};
                \node[state] (X5) [below= 2cm of X2]  {$X_5$};
                \node[state] (X6) [below= 2cm of X3]  {$X_6$};
                 \node[state] (X7) [below= 0.5cm of X6]  {$X_7$};
			\path[draw,thick,->]
			(X1) edge  (X2)
			(X1) edge  (X3)
			(X2) edge (Y)
                (X2) edge (X3)
                (X2) edge (X4)
                (X3) edge (Y)
                (Y) edge (X5)
	              (Y) edge (X6)
								(X2) edge[bend right] (X5)
                (X3) edge[bend left] (X5)
								(X2) edge[bend left] (X6)
                (X3) edge[bend left] (X6)
                (X6) edge (X7);
	\end{tikzpicture}
	\caption{}
\end{subfigure}
\begin{subfigure}{.59\linewidth}
\begin{equation*}
\begin{cases}
X_1&=\epsilon_1\\
X_2&=X_1+\epsilon_2\\
X_3&=X_1+X_2+\epsilon_3\\
Z&=\sin(5X_2)+ X_3^3+\epsilon_Z \\
Y&=F_{\rm{PA}}^{-1}(\Phi_{0,\sigma^2_Z}(\epsilon_{Z}))\\
X_4&=X_2+\epsilon_4\\
X_5&=Z+\epsilon_5\\
X_6&=Z+\epsilon_6\\
X_7&=X_6+\epsilon_7,
\end{cases}
\end{equation*}
\caption{\label{fig:sim7covariatesSEM}}
\end{subfigure}
\caption{Structural causal model for generating data on a target variable $Y$, $p=7$ predictors and the causal graph (a). The structural equations in (b), via a latent variable $Z$, show how the target $Y$ has a $\mbox{Poisson}(\sin(5X_2)+ X_3^3)$ distribution $F_{\rm{PA}}$ conditional on the causal parents $X_2$ and $X_3$.}
\label{fig:sim7covariatesY}
\end{figure}
The structural causal model in Figure \ref{fig:sim7covariatesSEM} shows how the distribution $F_{\rm{PA}}$ of $Y$ given its causal parents $X_2$ and $X_3$, is
\[ Y | \bX_{\mbox{\scriptsize PA}} = \bx_{\mbox{\scriptsize PA}} \sim \mbox{Poisson}\Big(e^{\sin(5x_2)+x_3^3}\Big),\]
that is, it is a Poisson regression where the predictor for the target $Y$ is described by the nonlinear  function $f_{\mbox{\scriptsize PA}}(x_2,x_3)=\sin(5x_2)+x_3^3$. The error terms are assumed to be independent and normally distributed with mean 0 and variance 0.04.  

\begin{figure}[h!]
\centering
\begin{subfigure}{.49\linewidth}
\centering
\includegraphics[scale=0.4]{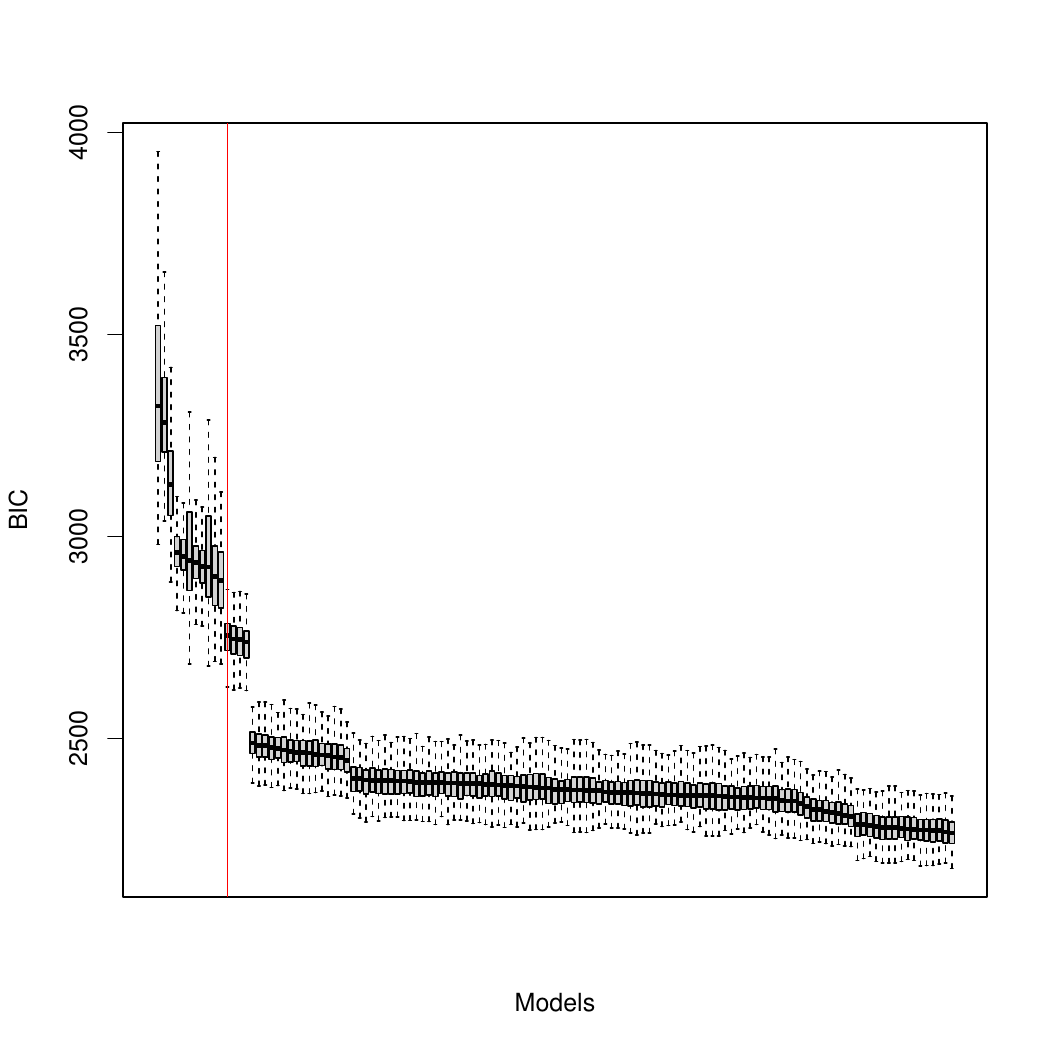}
\caption{\label{fig:BIC7}}
\end{subfigure}
\begin{subfigure}{.49\linewidth}
\centering
\includegraphics[scale=0.4]{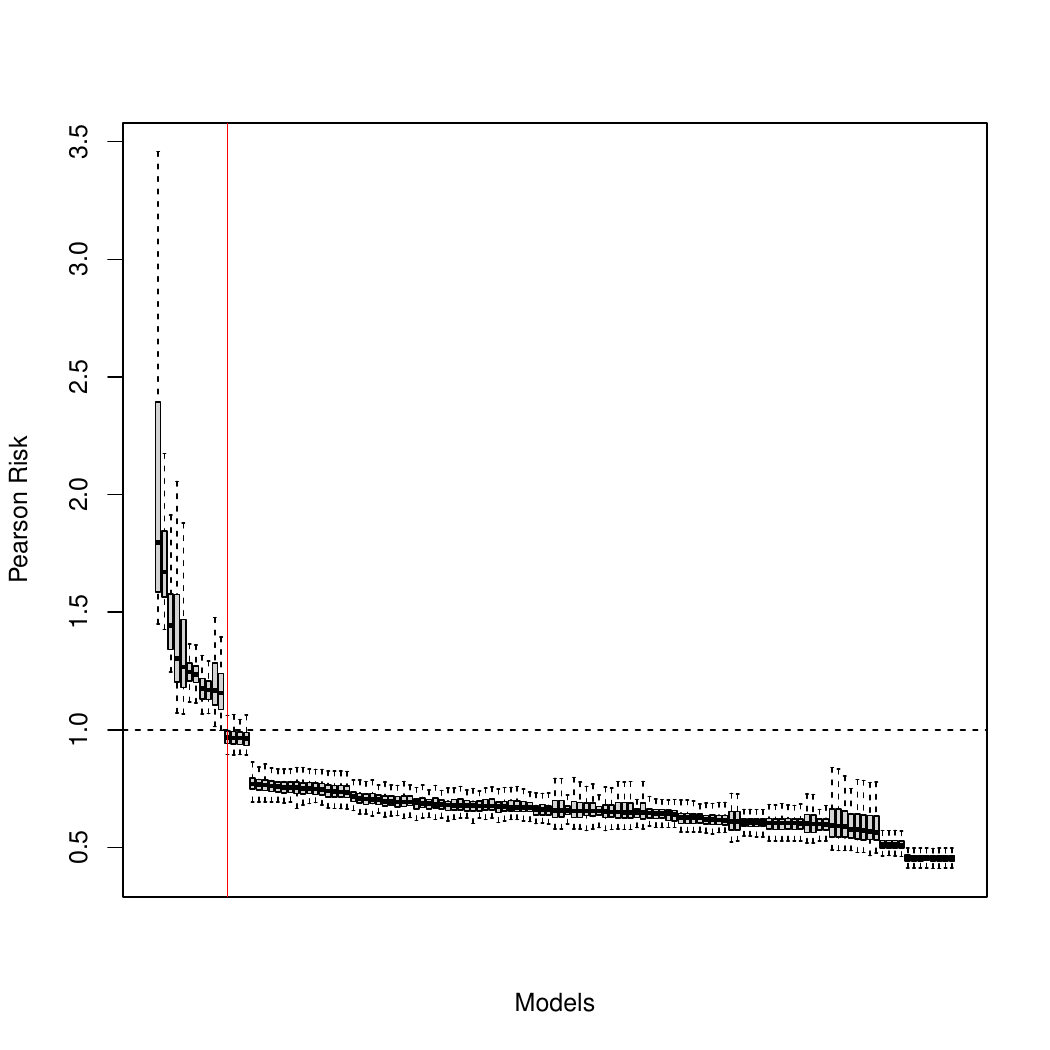}
\caption{\label{fig:Pearson7}}
\end{subfigure}
\begin{subfigure}{.49\linewidth}
\centering
\includegraphics[scale=0.4]{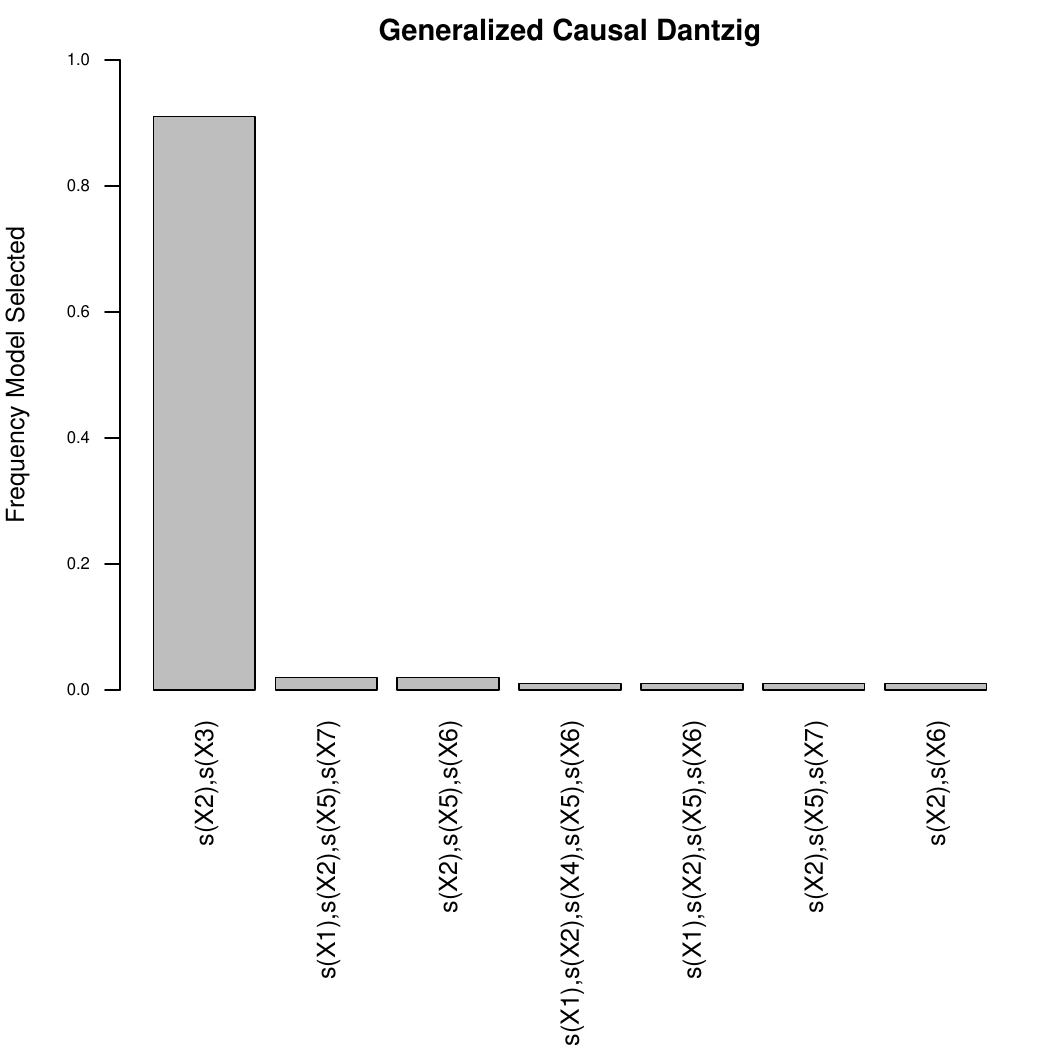}
\caption{\label{fig:optmod7}}
\end{subfigure}
\begin{subfigure}{.49\linewidth}
\begin{tabular}{|c|c|c|c|c|c|} \hline
 & \multicolumn{2}{c|}{Causal GLM} & \multicolumn{3}{c|}{PC-algorithm} \\ \cline{2-6}
$n$ & Full & Step & $X_2, X_3$ & $X_2$ & $X_3$\\ \hline
1000 & 91\% & 60\%  & 9\% & 24\% & 16\%\\
500 & 80\% & 64\% & 0\% & 14\% & 1\%\\
250 & 60\% & 49\% & 0\% & 11\% & 2\%\\
200 & 49\% & 41\% & 0\% & 9\% & 0\%\\
150 & 27\% & 33\% & 0\% & 2\% & 0\%\\
100 & 8\% & 8\% & 0\% & 1\% & 0\%\\ \hline
 \end{tabular}
\vskip 2.4cm
\caption{\label{tb:sim7covariates}}
\end{subfigure}
\caption{Results across 100 simulations with $n=1000$ from the Model in Figure \ref{fig:sim7covariatesY} with a Poisson distribution for $Y$ conditional on its causal parents. (a) BIC values across all models show how the causal model (red vertical line) is not the one that minimizes BIC; (b) Pearson risk values show how the causal model has a median Pearson risk close to 1; (c) \% times a model is selected by a causal GLM approach using a 5\% significance level shows a high percentage for the causal model; (d)The table reports the percentage of times that the correct model is selected by the proposed causal GLM approach, both under an exhaustive and a stepwise procedure. For the PC-algorithm, we count the graphs where there is a direct arrow to the target $Y$ from its parents ($X_2$, $X_3$), as well as those where only one of the two arrows is detected. Across all settings, the proposed method detects the true causal model more often than the PC-algorithm.}
\label{fig:sim7covariates}
\end{figure}

We repeat the simulation 100 times with sample size $n=1000$. We fit Poisson regression models for $Y$ for all possible subsets of the predictors. In order not to use any information from the generative process, we include also an intercept term in each of these models. Figure~\ref{fig:BIC7} shows the boxplots of BIC values of the models across the 100 simulations, sorted in descending order according to their median. The figure shows how the causal model (red vertical line) is far from being the most predictive model on the observational data. Indeed, the model with the lowest median BIC value is a model that includes the parents of $Y$ ($X_2$ and $X_3$) as well as its children ($X_5$ and $X_6$), which form the Markov blanket of $Y$ in the graph in Figure~\ref{fig:sim7covariatesY}. On the other hand, Figure~\ref{fig:Pearson7} shows how the causal model achieves a median Pearson risk close to 1, together with 3 other models that contain either ancestors ($X_1$) or variables blocked by the parents of $Y$ ($X_4$). Figure~\ref{fig:optmod7} shows how the causal GLM procedure is able to detect the correct causal model 91\% of the time out of all models, when setting an $\alpha=5\%$ in the asymptotic chi-squared statistical testing procedure. 

Figure~\ref{tb:sim7covariates} shows how the detection accuracy changes for decreasing sample sizes $n$. As expected, the accuracy in detecting the causal model decreases with $n$. Using a stepwise procedure rather than an exhaustive search generally leads to a reduction in the detection accuracy, but at a much lower computational cost (on average 5.2 times faster than the exhaustive search in this simulation study). In general, the accuracy achieved also by the stepwise procedure is significantly larger than what can be achieved with the PC-algorithm. As there are no implementations of the PC-algorithm that are specific to systems with Gaussian variables and a target count variable, we stabilized the variance of $Y$ by taking a square-root transformation and constructed the causal graph using Gaussian conditional independence tests on the resulting data.

\subsection{Finite sample behaviour: causal logistic regression}
\label{sec:simulation-logistic}
In this section, we evaluate the performance of the method in the case of a binary target variable $Y$, whose relationship with its causal parents is described by a logistic regression model.  We consider the system in Figure~\ref{fig:sim5covariates}, where $Y$ has 2 causal parents ($X_2$ and $X_3$) and with 3 other variables in the system.
\begin{figure}[tb!]
    \centering
    \begin{subfigure}{.39\linewidth}
    \centering 
		\begin{tikzpicture}
			\node[state] (X1)  {$X_1$};
			\node[state] (X2) [below= 0.5cm of X1 ]  {$X_2$};
			\node[state] (X3)[left = 1cm of X2] {$X_3$};	
			\node[state] (Y) [below= 1cm of $(X2)!0.5!(X3)$,blue]  {$Y$};
                \node[state] (X4) [below right= 0.8cm and 0.5cm of X2]  {$X_4$};
                \node[state] (X5) [below= 0.5cm of Y]  {$X_5$};
 		\path[draw,thick,->]
			(X1) edge  (X2)
			(X2) edge  (X4)
			(X2) edge (Y)
      (X3) edge (Y)
      (Y) edge (X5);
	\end{tikzpicture}
	\caption{}
\end{subfigure}
\begin{subfigure}{.59\linewidth}
\begin{equation*}
\begin{cases}
X_1&=\epsilon_1\\
X_2&=X_1+\epsilon_2\\
X_3&=\epsilon_3\\
Y& \sim \rm{Bernoulli}\Big(\dfrac{e^{0.8X_2-0.9X_3}}{1+e^{0.8X_2-0.9X_3}}\Big)\\
X_4&=X_2+\epsilon_4\\
X_5&=(1-\pi)Y+\pi (1-Y)+\epsilon_5,
\end{cases}
\end{equation*}
\caption{\label{fig:sim5covariatesSEM}}
\end{subfigure}
\caption{Structural causal model for generating data on a target variable $Y$, $p=5$ predictors and the causal graph (a). The structural equations in (b) show how the target $Y$ has a Bernoulli distribution conditional on the causal parents $X_2$ and $X_3$.}
\label{fig:sim5covariates}
\end{figure}

\begin{figure}[h!]
\centering
\begin{subfigure}{.49\linewidth}
\centering 
\includegraphics[scale=0.4]{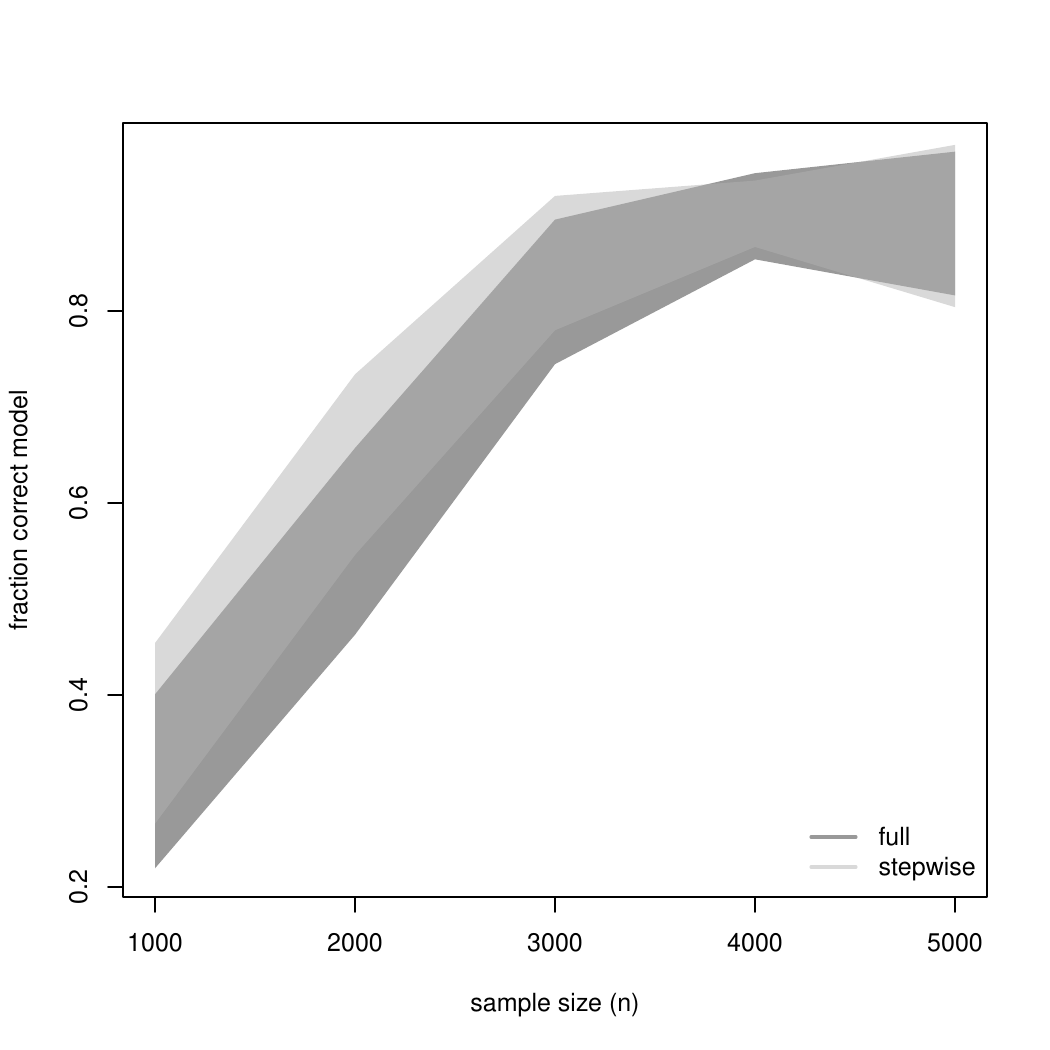}
\caption{\label{fig:logistic5cov-perc}}
\end{subfigure}
\begin{subfigure}{.49\linewidth}
\centering 
\includegraphics[scale=0.4]{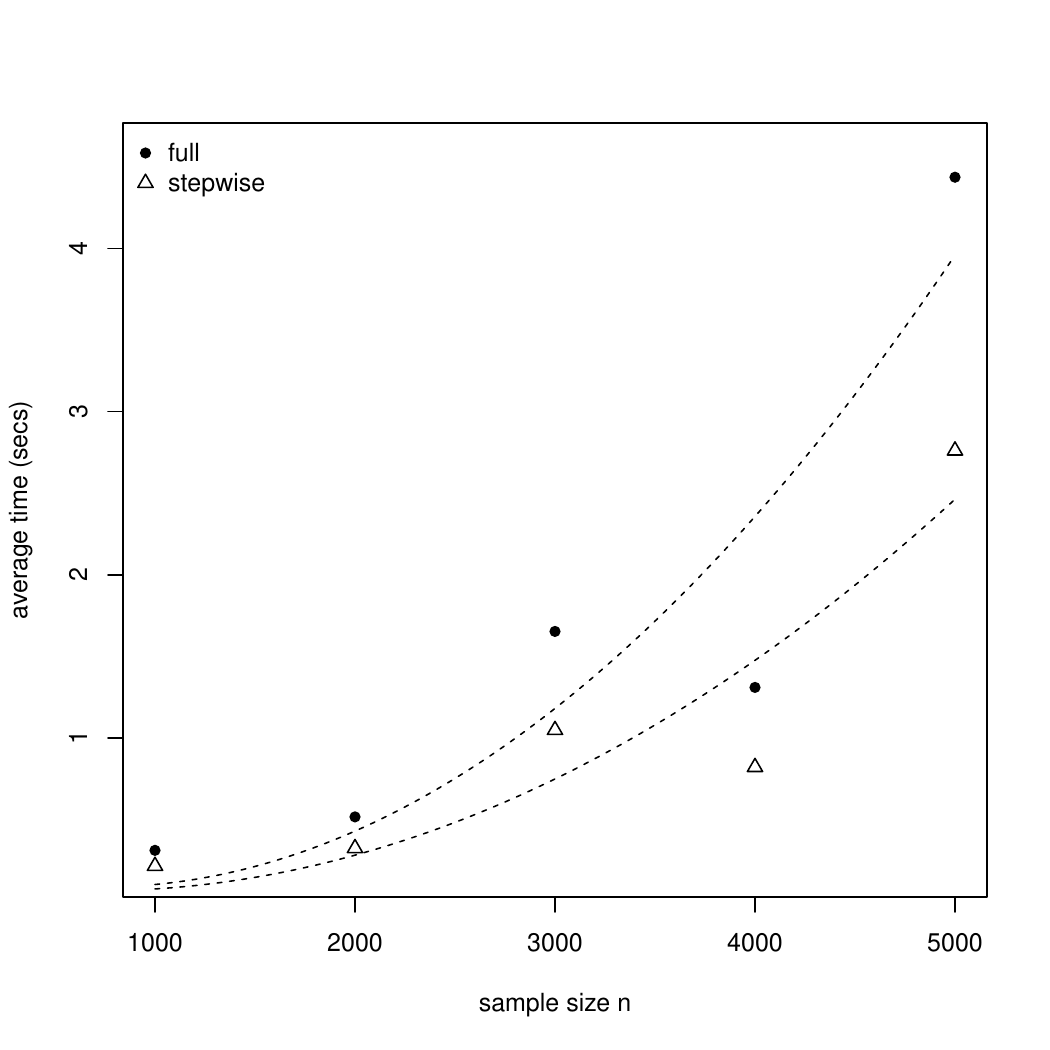}
\caption{\label{fig:logistic5cov-ct}}
\end{subfigure}
\caption{Results across 100 simulations with varying sample sizes from the Model in Figure \ref{fig:sim5covariates} with a Bernoulli distribution for $Y$ conditional on its causal parents. (a) The percentage of times that the correct model is selected by the proposed causal GLM approach, both under an exhaustive and a stepwise procedure, increases with $n$. (b) The stepwise procedure is computationally more efficient than the exhaustive search. 
\label{fig:logistic5cov}}
\end{figure}
We repeat the simulation 100 times with varying sample sizes and fit logistic regression models for $Y$ for all possible subsets of the predictors. For each model, a bootstrap test at a $5\%$ significance level is used to evaluate evidence against the null hypothesis of a perfectly dispersed Pearson risk.  The results in Figure \ref{fig:logistic5cov} show how, as with the previous simulation, the accuracy in detecting the true causal model increases with the sample size $n$. Moreover, particularly in the case when p-values are calculated via bootstrap, there is a significant gain in computational efficiency when using the stepwise procedure in place of the exhaustive search.

\section{Identifying causal drivers in empirical studies} \label{sec:realdata}

Here we illustrate the use of the causal GLM method on three empirical studies. In section~\ref{sec:chambers}, we consider data generated under a controlled experiment. In this case, there is knowledge about the true underlying physical model which can be used for validation. In section~\ref{sec:fertility} we consider data from a women's fertility study in the US. The aim is to identify the causal drivers of women's reproductive fertility. In section~\ref{sec:socioeco} we consider a socio-economic study for identifying potential causal drivers of high income. 

\subsection{Causal chambers: a controlled causal experiment}
\label{sec:chambers}
\cite{gamella25} describes a unique experimental program, called \emph{causal chambers}, in which physical laws are combined with a replicable experimental set-up to generate on-demand datasets that can be used for causal discovery validation. Here, we consider a dataset from a light tunnel set-up. The light tunnel contains a controllable light source and a variety of sensors, which allow to measure the light intensity at different points of the tunnel. 
Using the notation of  \cite{gamella25}, the following variables are of particular importance:
\begin{itemize}
\item $R$, $G$, $B$: Color setting of the light source
\item $\tilde{C}$: Current drawn by the light source
\item $\tilde{I}_1$,$\tilde{I}_2$, $\tilde{I}_3$: Infrared-intensity measurement at the first, second and third sensor
\item $\tilde{V}_1$, $\tilde{V}_2$, $\tilde{V}_3$: Visible-intensity measurement at the first, second and third sensor
\item $L_{i1}$, $L_{i2}$: Brightness of the two LEDs placed at the $i$th sensor
\item $\theta_1$, $\theta_2$: Position of the first and second polarizer.
\end{itemize}

\begin{figure}[h!]
\begin{subfigure}{.49\linewidth}
\begin{center}
\begin{tikzpicture}[
    scale=0.8,  
    node distance=1cm and 1cm,  
    every node/.style={draw=none, text centered},
    every path/.style={draw, -stealth}
]
\node (RGB) {\textbf{(R, G, B)}};
\node[left =0.5cm of RGB] (Ctilde) {$\tilde{C}$};
\node[below left=0.5cm and 1.5cm of RGB] (I1tilde) {$\tilde{I}_1$};
\node[below right=0.2cm and 0.1cm of I1tilde] (L12) {$L_{12}$};
\node[below left=0.5cm and 0.5cm of I1tilde] (L11) {$L_{11}$};
\node[below right =0.5cm and 0.8cm of L11] (V1tilde) {$\tilde{V}_1$};
\node[below right=0.5cm and 0.2cm of V1tilde] (I2tilde) {\textcolor{blue}{$\boldsymbol{\tilde{I}_2}$}};
\node[above right=0.2cm  and 0.2cm of I2tilde] (L21) {$\boldsymbol{L_{21}}$};
\node[below right=0.2cm  and 0.2cm of I2tilde] (L22) {$\boldsymbol{L_{22}}$};
\node[right=1cm of I2tilde] (V2tilde) {$\tilde{V}_2$};
\node[below=0.5cm of I2tilde] (D2I) {$\boldsymbol{D_2^I}$};
\node[below left=0.5cm and 0.5cm of I2tilde] (T2I) {$\boldsymbol{T_2^I}$};
\node[below right=0.05cm and 0.5cm of RGB] (Theta1) {$\theta_1$};
\node[right=0.2cm of Theta1] (Theta2) {$\theta_2$};
\node[above right=0.2cm and 0.2cm of V2tilde] (I3tilde) {$\tilde{I}_3$};
\node[right=0.4cm of I3tilde] (L31) {$L_{31}$};
\node[above right=0.2cm and 0.2cm of L31] (V3tilde) {$\tilde{V}_3$};
\node[below right=0.5cm and 0.1cm of I3tilde] (L32) {$L_{32}$};
\draw (RGB) -- (Ctilde);
\draw (RGB) -- (I1tilde);
\draw (RGB) -- (V1tilde);
\draw (L11) -- (I1tilde);
\draw (L12) -- (I1tilde);
\draw (L11) -- (V1tilde);
\draw (L12) -- (V1tilde);

\draw (RGB) -- (I2tilde);
\draw (RGB) -- (V2tilde);
\draw (L21) -- (I2tilde);
\draw (L21) -- (V2tilde);
\draw (L22) -- (I2tilde);
\draw (L22) -- (V2tilde);
\draw (D2I)--(I2tilde);
\draw (T2I)--(I2tilde);

\draw (RGB) -- (I3tilde);
\draw (RGB) -- (V3tilde);
\draw (L31) -- (I3tilde);
\draw (L31) -- (V3tilde);
\draw (L32) -- (I3tilde);
\draw (L32) -- (V3tilde);
\draw (Theta1) -- (I3tilde);
\draw (Theta2) -- (I3tilde);
\draw (Theta1) -- (V3tilde);
\draw (Theta2) -- (V3tilde);
\end{tikzpicture}
\end{center}
\caption{\label{fig:causalchamb-graph}}
\end{subfigure}
\begin{subfigure}{.49\linewidth}
\centering
\includegraphics[scale=0.4]{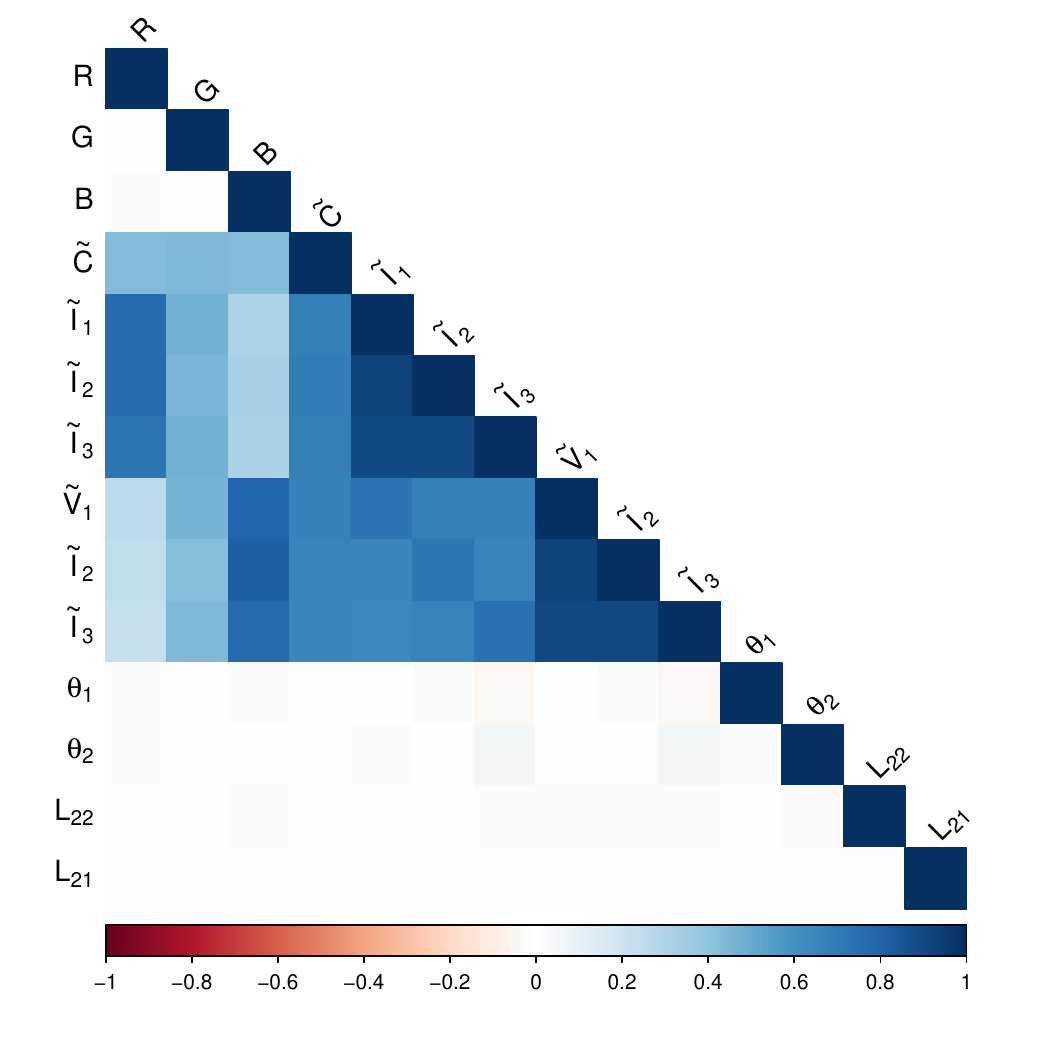}
\caption{\label{fig:causalchamb-corr}}
\end{subfigure}
\caption{(a) Causal graph underlying the light tunnel causal chamber experiment, taken from \cite{gamella25}. (b) Correlation heatmap among the main variables in the system.}
\label{fig:causalchambers}
\end{figure}
We focus on the variable $\tilde{I}_2$ in the uniform reference setting. A ground-truth causal graph of the system related to this variable is reported in Figure \ref{fig:causalchamb-graph} \citep{gamella25}.  Since in this environment the photodiodes used by the light sensors are set to the large infrared photodiodes ($D^I_2 = 2$) and their exposure time is set to the largest value ($T^I_2 = 3$), i.e., both are set to a constant, the causal parents of the target variable $\tilde{I}_2$ in the causal graph are the light source colors ($R,G,B$) and the brightness of the two LDEs placed at this second sensor ($L_{21}$, $L_{22}$).

As the variable $\tilde{I}_2$ is continuous, we construct a binary target variable  that indicates if the second light sensor $\tilde{I}_2$ is above a certain threshold, which we fix to $2000$. This means that $\tilde{I}_2^{\mbox{\scriptsize bin}}$ is Bernoulli distributed with a probability that depends on the distribution $F_{\tilde{I}_2}$ of $\tilde{I}_2$, which in turn depends on the causal parents, i.e.,
\begin{equation} \left\{ \begin{array}{l}
	\tilde{I}_2^{\mbox{\scriptsize bin}} \sim \mbox{Bern}(\pi) \\
	\pi = F^{-1}_{\tilde{I}_2}(2000; R,G,B, L_{21},L_{22})
\end{array}  \right.  \label{eq:chamber-truth}\end{equation}
The aim is to recover the causal parents of this target variable in the causal graph. As the response is binary, it is a natural choice to use the logistic additive regression version of the proposed causal approach, i.e., for $k=1,\ldots,10,000$,
\begin{equation} \left\{ \begin{array}{l}
	\tilde{I}_{2k}^{\mbox{\scriptsize bin}} \sim \mbox{Bern}(\pi_k) \\
	\mbox{logit}~ \pi_k = f_1(X_{k1})+ \ldots + f_p(X_{kp})
\end{array}  \right.  \label{eq:chamber-model}\end{equation}
where $X_1,\ldots,X_p$ represent the available variables in the uniform reference dataset provided by \cite{gamella25} as the observational environment for the analysis. However, it is important to note that the non-linear additive form in \eqref{eq:chamber-model} may not be rich enough to capture the true functional form in \eqref{eq:chamber-truth}. This may impede the discovery of the true causal parents. Moreover, as can be seen in Figure~\ref{fig:causalchamb-corr}, $L_{21}$ and $L_{22}$ appear to be not predictive of $\tilde{I}_{2}^{\mbox{\scriptsize bin}}$ and are not included in the Markov blanket under any analysis for the available number of observations $n$. For this reason, we exclude them from the causal discovery algorithm.

For the causal discovery, we restrict the search to the $p=9$ most predictive variables, found via a generalized additive model with all variables included and a null space penalization \citep{wood17}. In particular, we consider the following potential causal parents: $R$, $G$, $B$, $\tilde{C}$, $\tilde{I}_1$, $\tilde{I}_3$,  $\tilde{V}_1$, $\tilde{V}_2$, $\tilde{V}_3$. 
An exhaustive search across all $512$ models, using bootstrap for calculation of p-values and setting $\alpha=10\%$ as significance level, returns the causal model with $R$, $G$ and the three visible-intensity measurements as causal parents of $\tilde{I}_{2}^{\mbox{\scriptsize bin}}$. This is only partly reflecting the true generative process, since $B$ has not been detected and the non-causal variables $\tilde{V}_1$, $\tilde{V}_2$, and $\tilde{V}_3$ are most likely compensating (i) for the absence of this causal parent (indeed, they are the three most correlated variables with $B$ in Figure~\ref{fig:causalchamb-corr}) and (ii) for the miss-specification of the true link function in \eqref{eq:chamber-truth} by the additive logistic model in \eqref{eq:chamber-model}.

\subsection{Causal determinants of fertility} 
\label{sec:fertility} 
The National Opinion Resource Center's General Social Survey, which is currently available for 34 editions between 1972 and 2022 \citep{gss22}, can be used for the identification of the causal determinants of women's fertility. The survey is conducted on English-speaking women living in non-institutional arrangements in the United States. As in \cite{sander92}, we consider women between the age of $35$ and $54$, as younger women may not have completed their schooling or their reproductive period. Across the $34$ editions of the survey, this results in a sample size of $n= 16,269$ women. The target variable of interest is the \emph{number of children ever born}, whereas the potential causal determinants are investigated among $p=8$ available covariates:
\begin{itemize}
\item  level of education: years of schooling, mother's years of schooling, father's years of schooling (ordinal variables taking integer values between 0 and 20);
\item demographic variables: age, race (indicator variable for black), region of the country at age 16 (south, east, north-center, west), living environment at age 16 (farm, other rural, town, small city, large city);
\item time: the year of the study.
\end{itemize}

We fit an additive Poisson regression model with all variables, where we consider thin-plate regression splines for the 5 numerical variables (years of schooling, mother's and father's years of schooling, year of study, age) and factorial main effects for the remaining 3 categorical variables. The variables produce a relatively good fit, with 8.0\% of deviance explained. The results correspond to other association studies conducted in the literature, with years of schooling, age, race, living environment and years of study picked up as the most significant variables \citep{sander92,wooldridge09}. 

We now use the proposed approach to detect the causal determinants of fertility. A full search of all 256 models with $\alpha=5\%$ returns a model with years of schooling, mother's years of schooling, race, age, living environment and year of study as the causal determinants.  Figure~\ref{fig:fertility} shows the causal effects associated to the four numerical variables. 
\begin{figure}[h!]
\centering
\begin{subfigure}{.49\linewidth}
\centering
\includegraphics[scale=0.4]{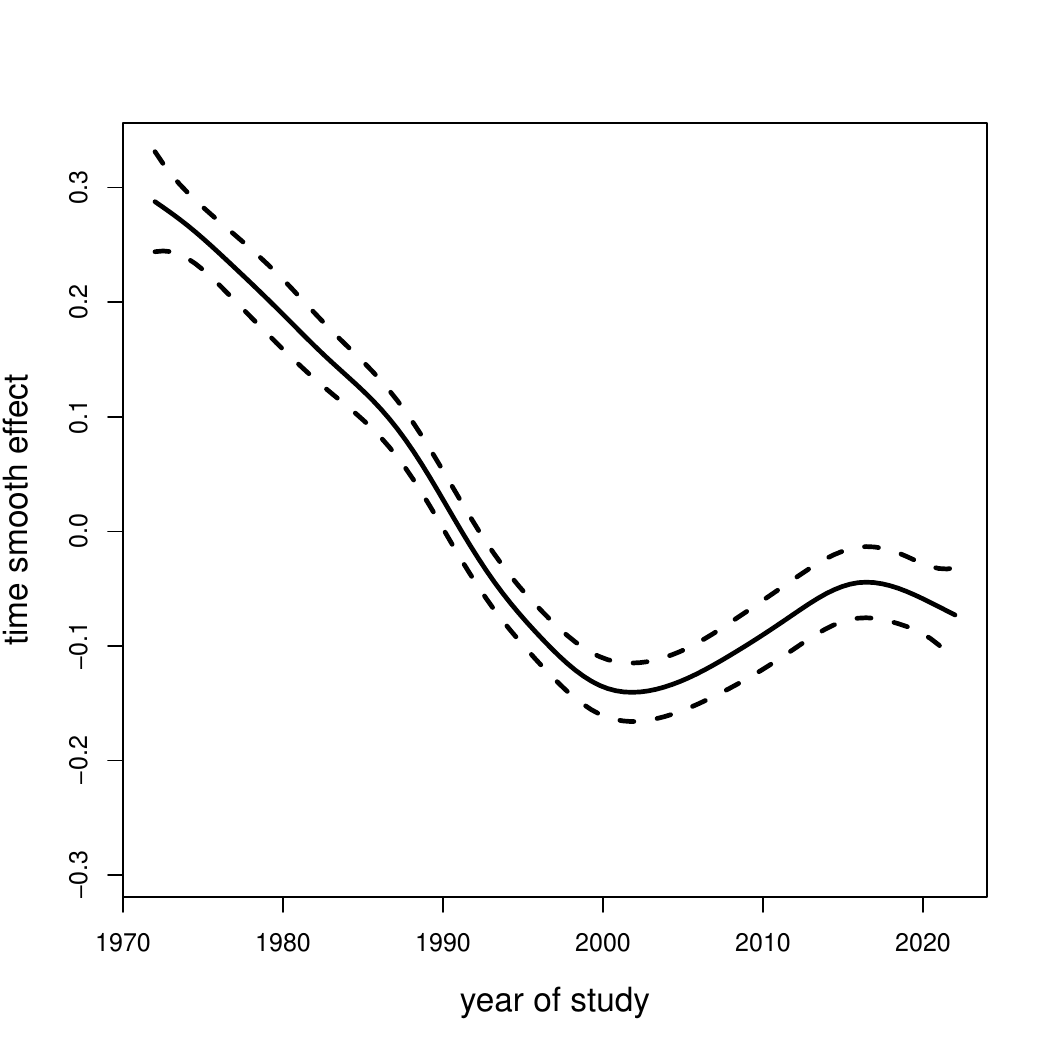}
\caption{\label{fig:fertilityyear}}
\end{subfigure}
\begin{subfigure}{.49\linewidth}
\centering
\includegraphics[scale=0.4]{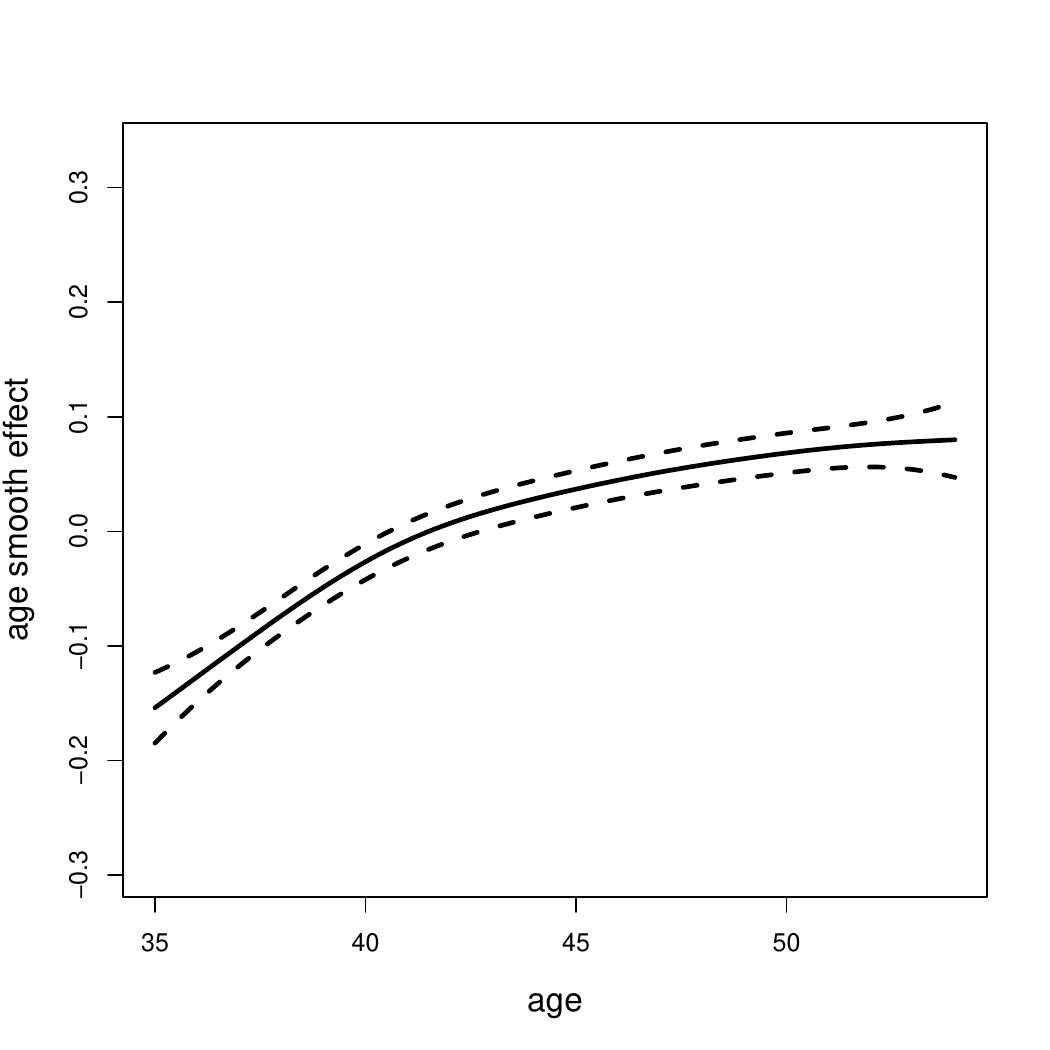}
\caption{\label{fig:fertilityage}}
\end{subfigure}
\begin{subfigure}{.49\linewidth}
\centering
\includegraphics[scale=0.4]{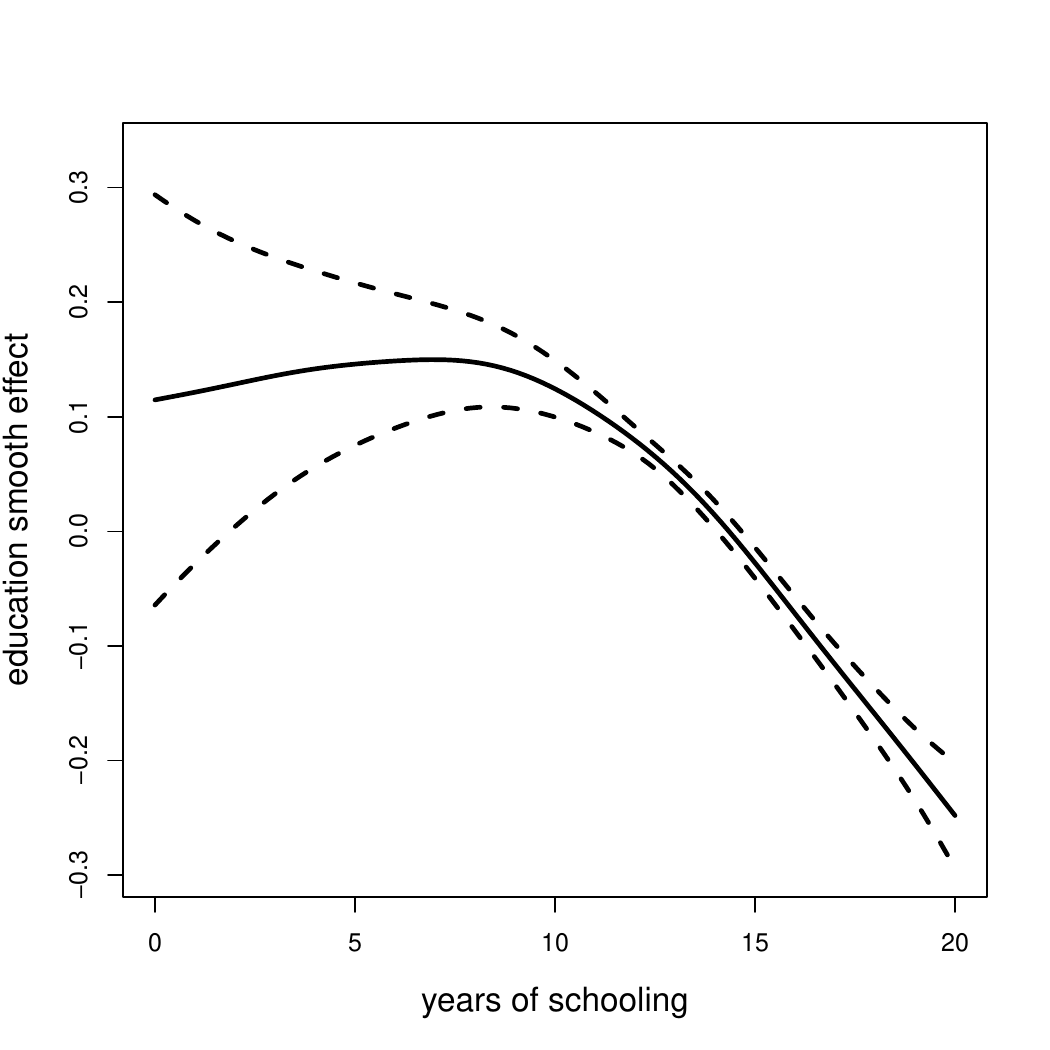}
\caption{\label{fig:fertilityeduc}}
\end{subfigure}
\begin{subfigure}{.49\linewidth}
\centering
\includegraphics[scale=0.4]{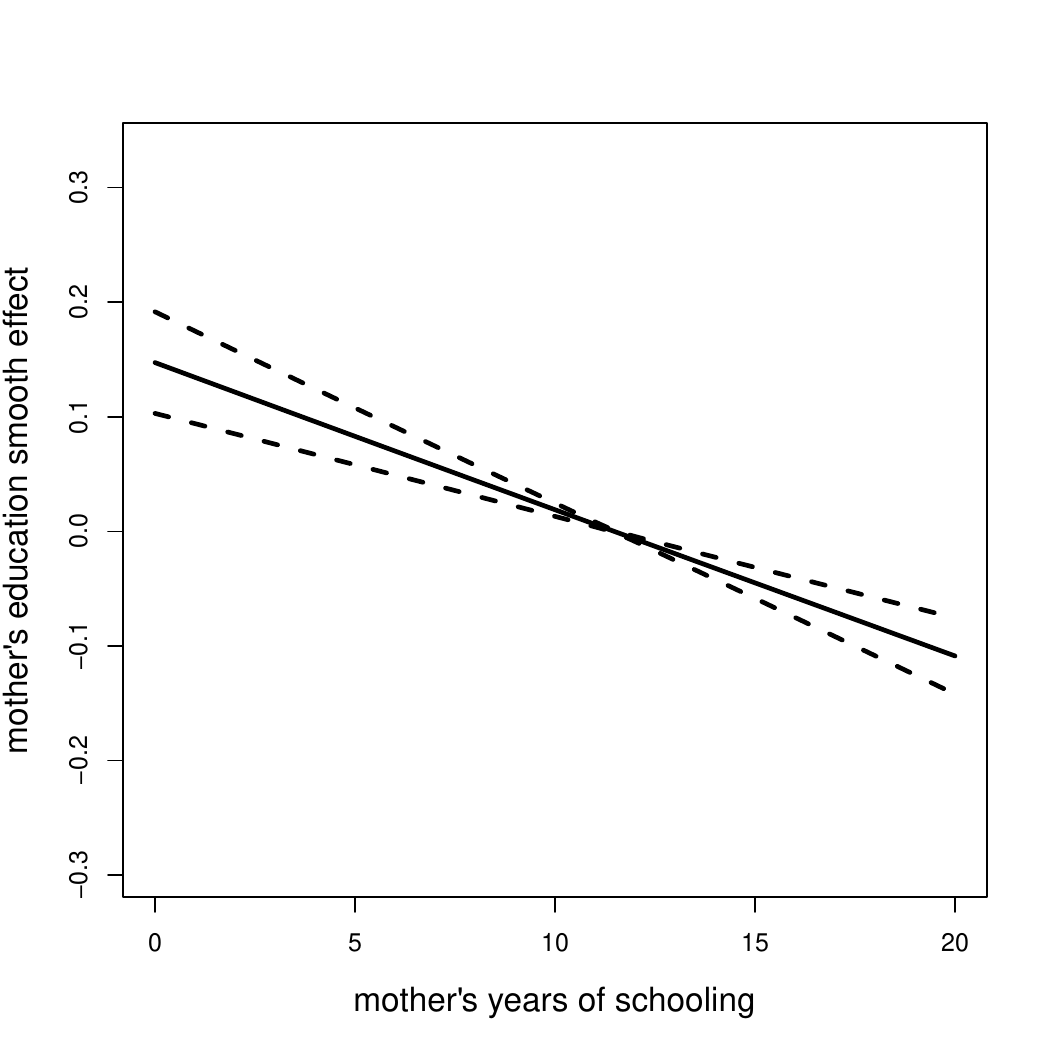}
\caption{\label{fig:fertilitymaeduc}}
\end{subfigure}
\caption{Causal determinants of women's fertility detected by the proposed invariant risk framework show how (a) time is associated to a decline in the number of children born; (b) age has a nonlinear increasing causal effect on the number of children born; (c) high levels of education cause a sharp drop in fertility; (d) mother's levels of education also cause a decrease in the number of children born.}
\label{fig:fertility}
\end{figure}
In particular, Figure~\ref{fig:fertilityyear} shows an overall decline over time in the number of children born, but with an interesting rise between 2000 and 2010. In the context of causality, this variable is most likely capturing unmeasured time-varying causal determinants of fertility. Figure~\ref{fig:fertilityage} shows a nonlinear and increasing causal effect of women's age on the number of children born, which is to be expected. The nonlinear association between this variable and the response is picked up also by \cite{wooldridge09} who includes a quadratic polynomial term for age in the linear model. It is interesting to note how a search of the causal model among the linear models does not pick up any causal determinant, even when one includes a quadratic term in age. This shows the flexibility of the proposed additive modelling framework in allowing to include smooth effects. Finally, Figure \ref{fig:fertilityeduc} shows a marked nonlinear causal effect of years of schooling on fertility, with high levels of education causing a sharp drop in fertility. This is pointed out also by \cite{sander92}. Although less pronounced, the same negative effect is found in relation to the mother's years of schooling. Among the categorical variables, we find a positive causal effect associated to being black, compared to other races (estimated causal coefficient 0.15) and to living in north-central or west American regions, compared to south (estimated causal coefficients both equal to 0.07). This is also in line with previous studies \citep{sander92}.

\subsection{Causal determinants of income}
\label{sec:socioeco}
The US census provides a wealth of socio-economic information. \cite{adult96} analyzed the 1994 census with the aim of developing a prediction model to find the factors that discriminate high earners ($> \$ 50K$ a year) from the rest ($ \le \$ 50K$ a year). The authors extracted a pre-processed subset of $n=32,561$ individuals from the original census data. While there are many association studies that have been conducted on this dataset, only few have considered it in the context of causality \citep{binkyte23}. We take this view and look for potential causal determinants amongst the following $p=8$ factors:
\begin{itemize}
\item level of education: ordinal variable taking integer values between 1 and 16;
\item demographic variables: age, race (white, asian-pacific-islander, american-indian-eskimo, other, black), gender, marital status (divorced, married, separated, single, widowed);
\item job-related variables: hours-per-week, occupation (blue-collar, white-collar, professional, sales, service, other/unknown), working class (government, private, self-employed, other/unknown). 
\end{itemize}

We apply the invariant risk method among the family of additive logistic regression models, where we consider thin-plate regression splines for the 3 numerical variables (education level, age and hours-per-week) and factorial main effects for the remaining 5 categorical variables. A full search of all 256 models with $\alpha=5\%$ returns a model with age, education level, marital status and occupation as the causal determinants.
\begin{figure}[h!]
\centering
\begin{subfigure}{.49\linewidth}
\centering
\includegraphics[scale=0.4]{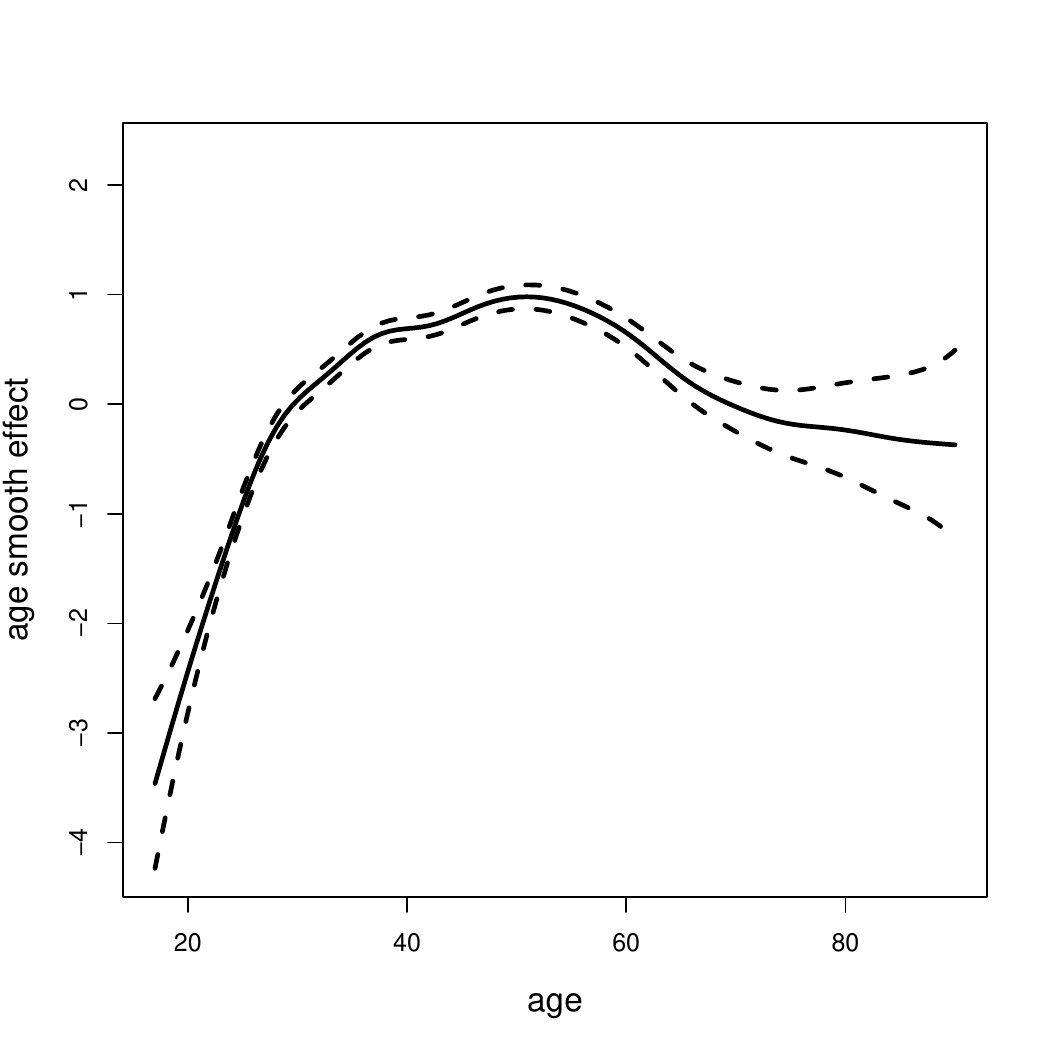}
\caption{\label{fig:incomeage}}
\end{subfigure}
\begin{subfigure}{.49\linewidth}
\centering
\includegraphics[scale=0.4]{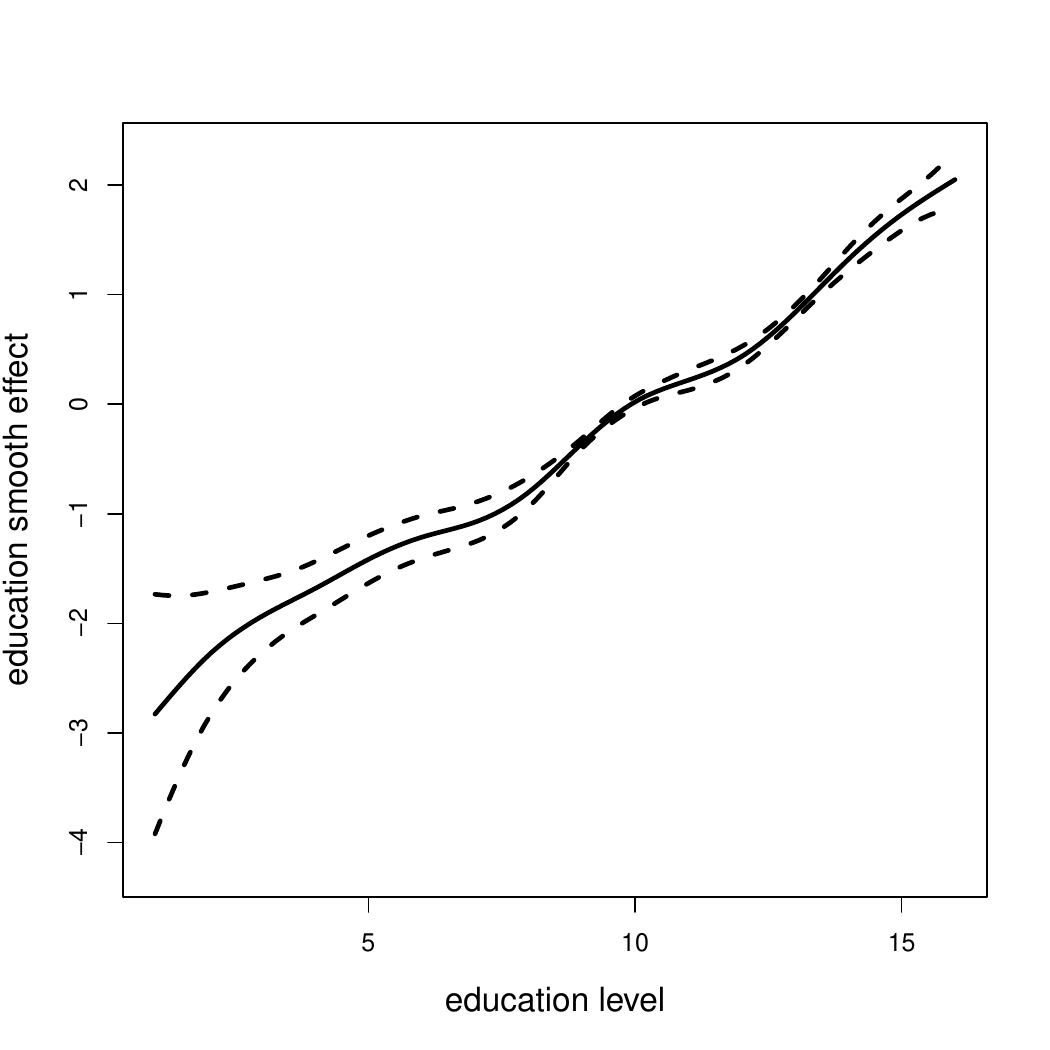}
\caption{\label{fig:incomeeduc}}
\end{subfigure}
\caption{Causal determinants of high income detected by the proposed invariant risk framework show how (a) age has an increasing causal effect on income, during the early working years; (b) education level has an increasing, close to linear, association with high earning.}
\label{fig:income}
\end{figure}

Figure \ref{fig:income}, for the two numerical variables, shows how age has a nonlinear causal effect on income, with a pronounced increasing effect during the early working years, while education level has an increasing, close to linear, association with high earning. As for the categorical variables, being male makes it 73\% more likely to be a high earner, whereas being married makes is roughly 7 times more likely than any other marital status level. With respect to type of occupation, white collar jobs, sales jobs and other professionals are between 60\% and 160\% more likely to be high earners, compared to blue collar jobs, services and other job types.

\section{Conclusion} \label{sec:conclusion}
In this paper, we have proposed a novel semi-parametric approach to causal discovery, when, conditional on its causal parents, the target variable of interest has a distribution in the exponential dispersion family. We have characterized the causal model by means of two properties: invariance of the expected Pearson risk and maximization of the expected likelihood. We have proposed a causal generalized linear model procedure, that consists of (i) a statistical testing procedure of the empirical Pearson risk, returning a list of candidate causal models; and (ii) a BIC selection step to refine the list of candidate models, as models containing variables that are d-separated from the target by the causal parents lead to the same perfectly dispersed Pearson risk as the causal model. 

For generalized linear models with a known dispersion parameter, such as for Poisson or Binomial, the Pearson risk is known. For these settings, the causal model can be recovered with data from a single environment. This is innovative in the context of invariant causal prediction, as the existing methods based on this idea require availability of data from multiple distinct experiments. Given the widespread use of Poisson and logistic regression models in association studies, this paper provides the means to tackle these studies from a causal perspective. For other distributional families, the method requires an a-priori value for the dispersion parameter, which may sometimes be available, such as for the negative binomial. If not available, data from multiple environments can be considered in order to fix the dispersion parameter across these environments, in line with similar invariance prediction methods available in the literature also for general response types \citep{kook23}.  

A further challenge of invariance prediction methods is the computational complexity of searching through the large space of potential causal models. In this paper, we propose a stepwise alternative to the exhaustive search across all models. The simulation studies show a natural reduction in computational cost without compromising accuracy too much. In order to increase computational efficiency, future work is looking at approximations of the test statistic under the null hypothesis, which are currently available only for some generalized linear models. In particular, we have used the fact that the Pearson statistics of a Poisson regression model can be well approximated by a chi-squared distribution under the null hypothesis to avoid time-consuming bootstrap procedures in this case.

\section*{Acknowledgments}
EW acknowledges support from the Swiss National Science Foundation (grant 192549). VV received financial support from the European Union (Next Generation EU Project PRIN 2022 PNRR, Mission 4 Component 2 Investment 1.1, Protocol Number P2022BNNEY, CUP E53D23016580001).

\bibliographystyle{chicago}
\bibliography{biblio}
\end{document}